\definecolor{mygreen}{RGB}{28,172,0} 
\definecolor{mylilas}{RGB}{170,55,241}
\crefname{equation}{Equation}{Equations}
\crefname{lemma}{Lemma}{Lemmata}
\crefname{claim}{Claim}{Claims}
\crefname{theorem}{Theorem}{Theorems}
\crefname{proposition}{Proposition}{Propositions}
\crefname{corollary}{Corollary}{Corollaries}
\crefname{claim}{Claim}{Claims}
\crefname{remark}{Remark}{Remarks}
\crefname{definition}{Definition}{Definitions}
\crefname{fact}{Fact}{Facts}
\crefname{question}{Question}{Questions}
\crefname{condition}{Condition}{Conditions}
\crefname{algorithm}{Algorithm}{Algorithms}
\crefname{assumption}{Assumption}{Assumptions}
\newtheorem{theorem}{Theorem}[section]
\newtheorem{lemma}[theorem]{Lemma}
\newtheorem{corollary}[theorem]{Corollary}
\newtheorem{claim}[theorem]{Claim}
\newtheorem{definition}[theorem]{Definition}
\newtheorem{observation}{Observation}
\newtheorem{fact}[theorem]{Fact}
\theoremstyle{definition}
\newtheorem{remark}[theorem]{Remark}
\newcommand\inner[2]{\langle #1, #2 \rangle}
\newcommand\system[3]{[ \, #1 \mid #2 \, ]_{#3}}
\def\NZT{\otimes_{i=1}^{N} \mathbb{R}_{\neq 0}^{d}}
\def\T{\otimes_{i=1}^{N} \mathbb{R}^{d}}
\def\A{\mathbf{A}}
\def\GF{\mathbb{F}_2}
\def\e{\mathbf{e}}
\def\D{\mathbf{D}}
\def\y{\mathbf{y}}
\def\x{\mathbf{x}}
\def\b{\mathbf{b}}
\def\P{\mathbf{P}}
\def\R{\mathbb{R}}
\def\u{\mathbf{u}}
\def\c{\mathbf{c}}
\def\B{\mathbf{B}}
\def\o{\mathbf{0}_d}
\newcommand\numberthis{\addtocounter{equation}{1}\tag{\theequation}}
\DeclareMathOperator{\rank}{rank}
\DeclareMathOperator{\Row}{row}
\DeclareMathOperator{\vect}{vec}
\DeclareMathOperator{\Span}{span}
\DeclareMathOperator{\sgn}{sign}
\newcommand{\cU}{\mathcal{U}}
\def\colorful{1}
\title{Simple and Nearly-Optimal Sampling for Rank-1 Tensor Completion via Gauss-Jordan}
\author{
Alejandro Gomez-Leos\thanks{University of Texas at Austin. alexgomezleos@utexas.edu}\\
\and
Oscar L\'opez \thanks{Harbor Branch Oceanographic Institute, Florida Atlantic University. lopezo@fau.edu}\\
}
\begin{document}

\maketitle

\setcounter{page}{0}

\thispagestyle{empty}

\begin{abstract}
We revisit the sample and computational complexity of completing a rank-1 tensor in $\otimes_{i=1}^{N} \mathbb{R}^{d}$, given a uniformly sampled subset of its entries.
We present a characterization of the problem (i.e. nonzero entries) which admits an algorithm amounting to Gauss-Jordan on a pair of random linear systems.
For example, when $N = \Theta(1)$, we prove it uses no more than $m = O(d^2 \log d)$ samples and runs in $O(md^2)$ time.
Moreover, we show any algorithm requires $\Omega(d\log d)$ samples.

By contrast, existing upper bounds on the sample complexity are at least as large as $d^{1.5} \mu^{\Omega(1)} \log^{\Omega(1)} d$, where $\mu$ can be $\Theta(d)$ in the worst case.
Prior work obtained these looser guarantees in higher rank versions of our problem, and tend to involve more complicated algorithms.
    
\end{abstract}

\newpage

\section{Introduction}
Tensor completion is a higher-order generalization of the well-known matrix completion problem~\cite{candes2010power,candes2012exact}.
More precisely, a tensor $\cU \in \T$ is a multi-dimensional array whose entries $\cU_{(i_1, \dots, i_N)}$ are specified by an ordered tuple of $N$ indices, each in $[d]:= \{1, \dots, d\}$.
Stated loosely, the problem is to recover the entirety of a tensor $\cU$ observing only a small subset of its entries.
\\

If $\cU$ is arbitrary, then this task is impossible without observing all $d^N$ entries of $\cU$.
As in matrix completion, the situation becomes interesting when $\cU$ has a \emph{low-rank} structure.
Conceptually, this means $\cU$'s entries are described by an interaction of a small number of variables---far fewer than $d^N$.
In practice, the framework of tensor completion enjoys several applications.
We refer to~\cite{kolda2009tensor, song2019tensor} for a broader overview of these.
\\
 
In this work, we revisit the problem of rank-$1$ tensor completion.
A tensor $\cU \in \T$ is said to be rank-$1$ if there exists $\{\mathbf{u}_1, \dots, \mathbf{u}_N \} \subseteq \mathbb{R}^{d}$ such that
\begin{equation}\label{eqn:polynomial}
    \cU_{(i_1, i_2, \dots, i_N)} = (\mathbf{u}_1)_{i_1} \cdot (\mathbf{u}_2)_{i_2} \cdot \ldots \cdot (\mathbf{u}_N)_{i_N} \quad  \quad \forall \, (i_1, i_2 \dots, i_N) \in [d]^{N},
\end{equation}
or for shorthand, $\cU = \mathbf{u}_1 \otimes \dots \otimes \mathbf{u}_N$.
Perhaps the most fundamental variant, we consider the setting in which the estimation algorithm only has access to uniformly drawn entries and is required to be correct with bounded error probability.\footnote{The optimal dependence for a given failure probability tolerance $\delta \in (0,1)$, even for general rank problems, seems to be unexplored. 
However, one can amplify to a $1-\delta$ guarantee with $O(\log\frac{1}{\delta})$ overhead by the usual method.}
Specifically, we study the following.

\begin{tcolorbox}[colframe=black,colback=white,sharp corners]
\textbf{Problem:} Rank-$1$ Completion

\textbf{Input:} Uniform sampling access to entries of rank-$1$ tensor $\cU$

\textbf{Output:}
Oracle access to $\hat{\cU}$ where $\hat{\cU} = \cU$ with probability $\geq 2/3$
\end{tcolorbox}

We are interested in the sample and computational complexity of the above as dependent on $d$ and $N$.
Our main contribution is a straightforward algorithm which is easiest to describe when the input tensor class is restricted to rank-1 tensors in $\NZT$.
As a consequence, we obtain the following.
\begin{theorem}(informal version of Theorem~\ref{theorem:main-result})\label{theorem:main-informal}
    Assume $\cU$ is an arbitrary rank-1 tensor with nonzero entries, and $N=\Theta(1)$ is a constant independent of $d$.\footnote{$d \gg N$ in practice, with tensors of $N > 5$ rarely seen in applications. For this reason we focus on optimality in $d$.}
    There exists an algorithm that solves \textbf{Rank-1 Completion} which draws at most $m = O(d^2 \log d)$ samples and can be implemented in time $O(md^2)$.
    Moreover, at least $\Omega(d\log d)$ samples are information-theoretically necessary.
\end{theorem}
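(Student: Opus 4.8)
The plan is to linearize the problem. For a sampled entry $s=(i_1,\dots,i_N)$ we have $\cU_s=\prod_{k=1}^{N}(\u_k)_{i_k}$, so applying $\log|\cdot|$ yields the real-linear constraint $\sum_{k=1}^{N}x_{k,i_k}=\log|\cU_s|$ in the $Nd$ unknowns $x_{k,i}:=\log|(\u_k)_i|$, and reading off the sign yields the $\GF$-linear constraint $\sum_{k=1}^{N}b_{k,i_k}=\sigma_s$ in the bits $b_{k,i}\in\GF$ encoding the signs of the $(\u_k)_i$, where $\sigma_s\in\GF$ encodes the sign of $\cU_s$. Both systems share the same $m\times Nd$ coefficient matrix $\A$ whose $s$-th row has a $1$ in each of the $N$ positions $(k,i_k)$, $k=1,\dots,N$ --- that is, $\A$ is the incidence matrix of the $N$-partite, $N$-uniform (multi)hypergraph whose hyperedges are the sampled tuples. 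The factors $(\u_1,\dots,\u_N)$ are only defined up to the $(N-1)$-dimensional rescaling group $(\u_1,\dots,\u_N)\mapsto(c_1\u_1,\dots,c_N\u_N)$ with $\prod_k c_k=1$, but $\cU$ itself is invariant: recovering the entry at a query $(j_1,\dots,j_N)$ is exactly evaluating the linear functional $\x\mapsto\sum_k x_{k,j_k}$, which annihilates every rescaling direction.

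The second step is a characterization: the samples determine $\cU$ over $\R$ (and determine the signs over $\GF$) if and only if $\rank\A=Nd-N+1$ over the respective field, equivalently every query functional lies in the row space of $\A$, equivalently --- reading $\A$ as a hypergraph --- a natural connectivity condition holds, which for $N=2$ is precisely that the bipartite graph of sampled entries is connected. The "only if" direction is a dimension count: the rescaling directions always lie in $\ker\A$, while the query functionals span their orthogonal complement, so $\rank\A<Nd-N+1$ forces some query functional to be non-constant along a kernel direction, exhibiting two valid rank-$1$ tensors with identical samples but a different entry. The "if" direction drives the algorithm: when $\rank\A$ is maximal, $\ker\A$ equals the rescaling subspace, so Gauss--Jordan on the consistent augmented system $[\,\A\mid\b\,]$ ($\b$ the observed log-magnitudes) returns some $\x^{\star}$ with $\sum_k x^{\star}_{k,j_k}=\sum_k x^{\mathrm{true}}_{k,j_k}$ for every query; the $\GF$ system is solved identically for the signs, and the two are combined multiplicatively. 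This step also yields the running time: Gauss--Jordan on an $m\times Nd$ system (over $\R$ or over $\GF$) costs $O\big(m(Nd)^2\big)=O(md^2)$ when $N=\Theta(1)$, and each subsequent entry query costs $O(Nd)$.

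The third and main step is to show that $m=O(d^2\log d)$ uniformly random samples make the above condition hold with probability at least $2/3$ --- in fact $1-o(1)$. I would prove this by conditioning so as to reduce to the connectivity of ordinary random bipartite graphs, together with a Chernoff-and-union-bound argument: grouping samples by a suitable subset of the coordinates exposes a family of (essentially independent) $d\times d$ rank-$1$ matrix-completion instances, each receiving a Binomial number of samples that concentrates about its mean, and once each such instance has collected $\Theta(d\log d)$ observations its bipartite graph is connected --- hence its slice is reconstructed --- except with probability $o(1/\mathrm{poly}\,d)$, so a union bound over the instances closes the argument; for $N=2$ this collapses to the classical threshold for connectivity of a random subgraph of $K_{d,d}$, matching the lower bound. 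I expect the main obstacle to be the bookkeeping that upgrades "all the relevant bipartite graphs are connected" to "$\rank\A=Nd-N+1$ over both $\R$ and $\GF$" --- in particular reconciling the rescaling kernel across the two fields, checking that this incidence structure has the same rank over $\R$ and over $\GF$, controlling the dependence between samples, and choosing the decomposition economically enough to keep the exponent of $d$ at the claimed value.

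Finally, the lower bound is a coupon-collector argument. If some coordinate value $i$ of some mode $k$ occurs in no sampled entry, then $(\u_k)_i$ --- hence the entire slice of $\cU$ in which mode $k$ equals $i$ --- is completely unconstrained, so no algorithm can return that slice correctly with probability exceeding $1/2$. Since a fixed coordinate of a fixed mode is missed by $m$ uniform samples with probability $(1-1/d)^m$, a first-moment computation (with a second-moment refinement for concentration) shows that for $m=o(d\log d)$ some coordinate is missed with probability $1-o(1)$, which forces $m=\Omega(d\log d)$.
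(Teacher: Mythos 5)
Your first two steps and the lower bound are essentially the paper's: the same $\log|\cdot|$ plus sign-bit linearization to a pair of systems over $\mathbb{R}$ and $\mathbb{F}_2$ sharing the coefficient matrix $\mathbf{A}$, the same rank characterization $\mathrm{rank}(\mathbf{A}) = dN-(N-1)$ with the rescaling directions spanning the kernel, the same Gauss--Jordan algorithm, and the same coupon-collector lower bound (the paper uses a Hoeffding-on-a-martingale bound where you propose a first/second-moment computation; both routes are fine for the $\Omega(d\log d)$ claim when $N=\Theta(1)$). The paper even records the $\mathbb{F}_2$-vs-$\mathbb{R}$ reconciliation you worry about via the elementary fact $\mathrm{rank}_{\mathbb{F}_2}(\mathbf{B}) \leq \mathrm{rank}_{\mathbb{R}}(\mathbf{B})$, which together with $\mathrm{rank}_{\mathbb{F}_2}(\mathbf{A}) = \mathrm{rank}_{\mathbb{R}}(\mathbf{A})$ means you only ever need to verify fullness over $\mathbb{F}_2$.

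The genuine gap is precisely the one you flag at the end: the slicing-into-$d\times d$-bipartite-instances argument does not give the claimed exponent once $N \geq 4$. Fixing $N-2$ coordinates produces $d^{N-2}$ slices, each needing $\Theta(d\log d)$ observations to be connected with small enough failure probability for the union bound, so the total is $\Theta(d^{N-1}\log d)$; an unfolding into a $d^{\lfloor N/2\rfloor}\times d^{\lceil N/2\rceil}$ rank-1 matrix improves this to $\Theta(d^{\lceil N/2\rceil}\log d)$ but is still worse than $O(d^2\log d)$ for $N\geq 5$. The paper instead bounds the sample complexity without decomposing into sub-instances: it shows (Claim 4.3) that \emph{any} proper subspace of $\mathrm{Row}_{\mathbb{F}_2}(\mathbf{A})$ already containing one row of $\mathbf{A}$ misses at least a $1/d$ fraction of $\mathbf{A}$'s rows, so the cumulative row span of the samples, tracked as a Markov chain on (dimension, subspace) pairs with absorbing state the full rowspace, self-loops with probability at most $1-1/d$ at every non-terminal state. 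Since the chain can jump at most $\mathrm{rank}(\mathbf{A})=O(dN)$ times, a union bound over the at most $d^{N\cdot\mathrm{rank}(\mathbf{A})}$ loop-free trajectories that fail to absorb gives $m = O\bigl(d\cdot N\cdot\mathrm{rank}(\mathbf{A})\cdot\log d\bigr) = O((dN)^2\log d)$. This is the ingredient you are missing: you need a uniform ``escape probability at least $1/d$ from any proper subspace'' statement rather than a per-slice connectivity statement, because the latter is a strictly stronger (and for $N\geq 4$ unaffordable) sufficient condition for the sampled rows to span $\mathrm{Row}(\mathbf{A})$.
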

For analogous problems, it is well-known that these complexities are usually impacted by the \emph{incoherence} $\mu$ of the input tensor class---perhaps surprisingly, our bounds have no such dependence.
This observation seems to have gone mostly undescribed in the current literature.
In effect, this work serves to elucidate the sample complexity gap between the rank-1 and general rank problem variants.
We elaborate on these points after reviewing the notion of $\mu$-incoherent tensors.

\begin{definition}\label{defn:incoherent-subspace}
    An $r$-dimensional subspace $W \subseteq \R^d$ is called $\mu$-incoherent if satisfies 
    \begin{equation*}
        \|\text{Proj}_{W} \, \e_i \|_{2} \leq \sqrt{\frac{\mu r}{d}}  \quad \forall i \in [d].
    \end{equation*}
\end{definition}

\begin{definition}[e.g.~\cite{cai2019nonconvex, singh2020rank}]\label{defn:CP}
    The CP-rank of a tensor $\cU \in \T$ is the minimal integer such that $\mathcal{U} = \sum_{k=1}^{r} \sigma_k \, \mathbf{u}_{1,k} \otimes \dots \otimes \mathbf{u}_{N,k} $ holds for vectors $\{\mathbf{u}_{1,k}, \dots, \mathbf{u}_{N,k}\}_{k=1}^{r} \subseteq \R^d$ and scalars $\{\sigma_k\}_{k=1}^{r} \subseteq \R$.
\end{definition}
\begin{definition}[e.g.~\cite{liu2020tensor,singh2020rank}]\label{defn:incoherent-tensor}
    A (CP) rank-$r$ tensor is called $\mu$-incoherent if each of the following subspaces are $\mu$-incoherent
    \begin{equation}\label{eqn:subspaces}
        \Span \left\{ \frac{\mathbf{u}_{1,k}}{\|\mathbf{u}_{1,k}\|_2} , \frac{\mathbf{u}_{2,k}}{\|\mathbf{u}_{2,k}\|_2} , \dots , \frac{\mathbf{u}_{N,k}}{\|\mathbf{u}_{N,k}\|_2} \right \}, \quad k=1, 2, \dots, N.
    \end{equation}
\end{definition}

\smallskip
In the rank-1 case, $\cU = \mathbf{u}_1 \otimes \dots \otimes \mathbf{u}_N$ being $\mu$-incoherent simply means $\| \mathbf{u}_i\|_{\infty} \leq \sqrt{\mu/d} \quad \forall i \in [d]$.

\paragraph{Prior Work}
Our problem is a very basic case of the well-studied exact tensor completion problem under uniform sampling.
As such, we'll describe the bounds obtained for our case.\footnote{Some of these bounds are for the \emph{expected} sample complexity under the Bernoulli model~\cite{candes2010power}. Rest assured, the comparison is not that imprecise due to concentrations.}
Popularly, the earliest guarantees for matrices ($N=2$) were given in~\cite{candes2010power, recht2011simpler, candes2012exact}.
Notably, these exhibited a strong dependence of $\mu$ on the sample complexity, culminating in~\cite{chen2015incoherence} establishing that $\lesssim  d \mu \log^2 d$ samples suffice for $d^{-\Theta(1)}$ failure probability tolerance.
For cubic tensors ($N=3$), the works~\cite{jain2014provable, xia2019polynomial, liu2020tensor} indicate a scaling increase to $\lesssim  d^{1.5} \mu^{O(1)} \log^{O(1)} d$ for a comparative failure tolerance.
The theory for $N \geq 4$ has received much less attention, with the first results from~\cite{krishnamurthy2013low}, although for adaptively chosen entries. 
These scaled as $\lesssim d\mu^{O(N)} N^{O(1)} \log \frac{d}{\delta}$ for tolerance $d\delta$.
Under varying structural assumptions, the works~\cite{montanari2018spectral,stephan2024non,haselby2024tensor} make significant advancements, but still with dependence on $\mu^{O(1)}$.
Since $\mu$ can be $\Theta(d)$ in the worst case, this substantiates our assertion that Theorem~\ref{theorem:main-informal} elucidates a complexity gap between \textbf{Rank-1 Completion} and these more generic problems.\footnote{Two well-known lower bounds are given in~\cite{candes2010power,singh2020rank} (Theorem 1.7 and Theorem 5, resp.).
These seem to suggest a multiplicative $\mu^{\Omega(N)}$ is missing from our upper bound.
However, their hard instances do not apply to our setting.
}
\\

On the other hand, a distinct series of works~\cite{kiraly2015algebraic, kahle2017geometry,rendon2018algebraic,singh2020rank,zhou2024rank} have focused on the \emph{completability} of a fixed, partially observed rank-1 tensor (i.e. the solution set for the polynomial system~\eqref{eqn:polynomial}).
Perhaps as a testament to the non-triviality of this problem, these works invoke advanced tools from algebraic geometry and matroid theory~\cite{kiraly2015algebraic, rendon2018algebraic}.
We emphasize these do not study the sampling aspect of our problem.
\\

The algorithms of these prior works tend to be quite intricate, albeit specialized for the harder problem of general rank tensor completion.
A few popular methods are alternating minimization~\cite{jain2014provable,xia2019polynomial,liu2020tensor} and semidefinite programming / sum-of-squares hierarchies~\cite{chen2015completing,potechin2017exact,zhou2024rank}, amongst others.
For a more complete overview, we refer to~\cite{cai2019nonconvex,liu2020tensor,haselby2024tensor}.
\\

By contrast, in this work we describe a linear algebraic characterization, which was independently observed by~\cite{singh2020rank} for cubic tensors.
However, the work did not leverage this to study the statistical hardness of our problem.
We show \textbf{Rank-1 Completion} admits an extremely simple algorithm, whose main bottleneck is solving a $\Tilde{O}(d^2) \times \Tilde{O}(d)$ system over $\GF$.
As a consequence, the toolbox for our analysis only consists of elementary linear algebraic and probabilistic arguments.

\paragraph{Our Result}

In this work, we prove the following.
\begin{theorem} \label{theorem:main-result}
    If $\cU \in \NZT$ is a rank-1 tensor with $\| \cU\|_{\infty} \leq \rho$ for some $\rho > 0$, then there exists an algorithm solving \textbf{Rank-1 Completion} using $m \lesssim (dN)^2 \log d$ uniformly drawn entries, and can be implemented in time $\lesssim qN + m(dN)^2$, where $q$ is the number of queried entries of $\mathcal{U}$.
    \smallskip
    
    Moreover, there exists absolute constants $n_0 \in \mathbb{N}$ and $C > 0$ such that $dN \geq n_0$ implies any estimator drawing less than $C  d \log dN$ samples suffers error $\| \hat{\cU} - \cU \|_{F} \gtrsim \rho \sqrt{d^{N-1}}$ with probability $\geq 1/3$.

\end{theorem}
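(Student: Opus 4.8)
The starting point is a linear-algebraic characterization of the problem. Write $\cU_{(i_1,\dots,i_N)} = \prod_{j=1}^N (\u_j)_{i_j}$ with all $N$ factors having nonzero entries. Taking absolute values and logarithms, every observed entry yields the linear equation $\log|\cU_{(i_1,\dots,i_N)}| = \sum_{j=1}^N \log|(\u_j)_{i_j}|$ in the $Nd$ real unknowns $\{\log|(\u_j)_a|\}$, whose coefficient vector is $0/1$ with exactly one $1$ in each of the $N$ ``mode blocks.'' Independently, identifying a sign with a bit, the signs satisfy over $\GF$ the equation $\sgn(\cU_{(i_1,\dots,i_N)}) = \sum_{j=1}^N \sgn((\u_j)_{i_j})$, a system in $Nd$ $\GF$-unknowns with the \emph{same} $0/1$ coefficient pattern. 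The algorithm samples $m$ entries, runs Gauss--Jordan on each of the two systems, picks any solution, and returns the induced rank-1 tensor $\hat\cU$ (answering each oracle query in $O(N)$ time). Both systems are consistent but underdetermined --- their solution spaces always contain the $(N-1)$-dimensional gauge orbit $\u_j \mapsto c_j \u_j$, $\prod_j c_j = 1$ --- and since this orbit is precisely the fiber of the map ``solution $\mapsto$ induced tensor,'' we get $\hat\cU = \cU$ exactly as soon as \emph{both} coefficient matrices attain the maximal possible rank $Nd-N+1$, equivalently as soon as every target tuple's $0/1$ vector lies in the $\R$-span (resp.\ $\GF$-span) of the sampled ones. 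Reading the $q$ queried entries costs $O(qN)$ and each Gauss--Jordan on an $m \times Nd$ system costs $O(m(Nd)^2)$, matching the stated runtime.

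\textbf{Why random samples suffice, and the main obstacle.} It remains to certify that, with probability $\geq 2/3$ and $m \gtrsim (dN)^2\log d$, both matrices have rank $Nd - N + 1$. I would define a combinatorial ``identifiability'' event $\mathcal E$ on the random sample set, chosen so that whenever $\mathcal E$ holds one can \emph{explicitly} exhibit every target vector as a $\{0,\pm1\}$-combination of sampled ones along telescoping paths --- a construction valid over any field, which therefore pins down the rank over $\R$ and $\GF$ at once. Roughly, $\mathcal E$ says that the modes can be peeled off one at a time: for mode $j$, any two sampled entries agreeing on all not-yet-processed coordinates produce a relation supported on a single pair $\{\e_{j,a},\e_{j,b}\}$ (a ratio $(\u_j)_a/(\u_j)_b$, resp.\ a parity), and over all sampled pairs these relations connect up all $d$ values of mode $j$; one fully sampled entry then anchors the global scale. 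The crux is bounding $\Pr[\mathcal E]$: the number of sampled pairs that agree on a prescribed block of coordinates concentrates around $m^2 / d^{\Theta(1)}$, which for $m \gtrsim (dN)^2 \log d$ comfortably clears the $\Theta(d\log(dN))$ connectivity/coupon-collector threshold inside each mode, after which a union bound over the $O(N)$ modes together with a second-moment argument (values in distinct modes behave independently, values in the same mode are negatively correlated) controls the total failure probability. The difficulty is genuine: exact identifiability of a partially observed rank-1 tensor is a delicate matroid condition, so one must settle for a clean \emph{sufficient} $\mathcal E$ that wastes at most polylogarithmic factors (and the claimed $N^2$ overhead) relative to the truth.

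\textbf{Lower bound.} For the converse I would argue sign-indistinguishability. Let all factor magnitudes equal $\rho^{1/N}$ and let every sign $\varepsilon_{j,a} := \sgn((\u_j)_a)$ be an independent uniform $\pm1$, so that $\cU$ is rank-1 with nonzero entries and $\|\cU\|_\infty = \rho$. Call a vertex $(j,a)$ \emph{uncovered} if no sampled entry has $j$-th coordinate equal to $a$; this is a function of the sample set $S$ alone. On the event that $S$ leaves some vertex uncovered, fix one such $(j,a)$: since no sampled value involves $\varepsilon_{j,a}$, conditioning on $S$ and on all other signs leaves $\varepsilon_{j,a}$ uniform and independent of the estimator's entire view, hence $\hat\cU$ is independent of $\varepsilon_{j,a}$. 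Flipping $\varepsilon_{j,a}$ negates all $d^{N-1}$ entries of the slice $\{i_j = a\}$, each of magnitude $\rho$; writing that slice as $\varepsilon_{j,a}\, w$ with $\|w\|_2^2 = \rho^2 d^{N-1}$, the identity $\|x - w\|^2 + \|x + w\|^2 = 2\|x\|^2 + 2\|w\|^2$ forces $\|x - \varepsilon_{j,a} w\|^2 \geq \|w\|^2$ for at least one sign, so conditionally with probability $\geq 1/2$ over $\varepsilon_{j,a}$ we get $\|\hat\cU - \cU\|_F \geq \rho\sqrt{d^{N-1}}$. Finally, for $m < Cd\log(dN)$ with $C$ a small absolute constant, the expected number of uncovered vertices is $Nd(1-1/d)^m \gtrsim (Nd)^{\Omega(1)} \to \infty$; distinct vertices in different modes are exactly independent and those in the same mode are negatively correlated, so the variance is at most the mean and Chebyshev gives an uncovered vertex with probability $1 - o(1)$ as $dN \to \infty$. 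Combining the two bounds yields error $\gtrsim \rho\sqrt{d^{N-1}}$ with probability $\geq \frac12 (1 - o(1)) > 1/3$ once $dN \geq n_0$. This direction is routine; the only points requiring care are the conditioning that decouples $\varepsilon_{j,a}$ from $\hat\cU$ and the parallelogram step converting an expectation lower bound into a constant-probability one.
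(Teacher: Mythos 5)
Your overall framework matches the paper exactly: you pass to the pair of linear systems (over $\R$ for the log-magnitudes and over $\GF$ for the signs) with the common $0/1$ coefficient matrix $\A$, note $\rank = dN - (N-1)$ via the gauge orbit, run Gauss--Jordan, and observe that correctness is equivalent to the sampled rows spanning the full rowspace over both fields. The runtime accounting is the same. The lower bound also uses the identical hard instance ($\pm 1$ factors, magnitude $\rho^{1/N}$), the same notion of an uncovered variable, and the same ``flipping the uncovered sign negates a $d^{N-1}$-entry slice'' error argument (your parallelogram identity is the same as the paper's per-entry pigeonhole, done in one line at the Frobenius level).

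Where you diverge is in the two probabilistic tail bounds, one of which is fine and one of which has a genuine gap.

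\textbf{Lower bound (fine, and simpler than the paper).} You argue the coupon-collector threshold via a second moment on the number of uncovered vertices: cross-mode indicators are independent, same-mode indicators are negatively correlated, so the variance is at most the mean and Chebyshev does the rest. This is correct and arguably cleaner than the paper's route, which instead bounds the moment generating function of the uncovered count by a recursive application of Hoeffding's lemma to the martingale increments $I_{t+1}^i$ (Appendix A). Both give the same $\Omega(d\log(dN))$ threshold; your version is more elementary.

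\textbf{Upper bound (gap).} Your plan for proving that $m\lesssim (dN)^2\log d$ samples span the rowspace --- peel modes one at a time using pairs of samples that ``agree on all not-yet-processed coordinates'' to extract single-mode relations, then argue connectivity inside each mode --- does not give the claimed bound for general $N$. For the relation to be supported on a single mode block, the two samples must agree on all $N-1$ other coordinates, which for a uniform pair happens with probability $d^{-(N-1)}$, not $d^{-\Theta(1)}$ as you wrote. So the expected number of usable pairs for the first peeled mode is $\Theta(m^2/d^{N-1})$, and clearing the $\Theta(d\log(dN))$ connectivity threshold within that mode would require $m\gtrsim d^{N/2}$, which is \emph{larger} than $(dN)^2\log d$ once $N\ge 5$ or so. The sequential ``recover mode $N$, divide it out, recurse on an $(N-1)$-mode tensor'' variant does not fix the base case. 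The paper sidesteps this entirely: it treats $\Row_{\GF}(\D_S\A)$ as a cumulative span, models it as a time-homogeneous Markov chain on pairs $(\dim W, W)$, shows via Claim~\ref{claim:maximal-subset} (using the explicit basis matrix $\Phi$) that any non-terminal state self-loops with probability at most $1-1/d$, and then union-bounds over the at most $d^{N\cdot\rank_{\GF}(\A)}$ loop-less bad trajectories; the real-field case is then free by $\rank_{\GF}\le\rank_{\R}$ plus Lemma~\ref{lemma:rankA}. This is what actually delivers $m=O((dN)^2\log d)$ for all $N$, so your sketch needs to be replaced by (or reduced to) that style of span-growth argument rather than a single-mode pair-counting argument.
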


\begin{remark}
    Many lower bounds in the field are stated in a somewhat different sense---that if too few samples are taken, then there likely exists distinct tensors agreeing on the observed entries.
    The standard conclusion is that any algorithm that solely bases its decision on the observed entries must fail~\cite{candes2010power, krishnamurthy2013low}.
    However, this may be unsatisfying since it does not indicate how much error might be suffered, or how well the adversary fares against an estimator with side information.
    This motivates the style of our bound.
\end{remark}

To prove the theorem, we show every rank-1 tensor in $\otimes_{i=1}^{N} \mathbb{R}_{\neq 0 }^d$ is in correspondence with a pair of linear systems---one over $\GF$, and the other over $\mathbb{R}$.
The former system "encodes information" about the signs of the tensor's entries, whereas the latter system does so for their magnitude.
In a precise sense, we show that the "information" revealed by sampling an entry is "equivalent" to the "information" encoded by one row of the associated augmented matrix.
These systems are overdetermined, so it suffices to obtain just a small subset of their rows. 
In essence, this characterization illuminates that the sample complexity can be studied as a matrix row sketch problem over $\GF$.
\\

In what follows, we describe some specialized notation to be used for the rest of this work, and delve into a simple characterization for rank-1 tensors in $\NZT$.

\paragraph{Notation}
We reserve calligraphic letters (e.g. $\mathcal{U}$) for tensors, upper-case bold for matrices (e.g. $\mathbf{U}$), and lower-case bold for vectors (e.g. $\mathbf{u}$). 
All vectors are conventionally column vectors, unless otherwise stated.
We denote $\e_k$ for $k \in [d]$ as the indicator row vector for $[d]$, interpreting the $\e_{k}$'s having entries in $\GF$ or $\mathbb{R}$, wherever clear from context.
For two row vectors $\x$ and $\y$, we denote $[\x, \y]$ as their concatenation.
For a $n_1 \times n_2$ matrix $\A$ and the vector $\mathbf{b}$ with $n_1$ elements, we denote their induced linear system over a field $\mathbb{K}$ by the augmented matrix $ \system{\A}{\mathbf{b}}{\mathbb{K}}$, regardless of whether the system is consistent or not.
For a scalar-valued function $f$ applied to a tensor, matrix, or vector, we mean entry-wise.
For a matrix $\B$, we denote $\B_i$ as its $i^{th}$ row.
We denote the rowspace of a matrix $\B$ over field $\mathbb{K}$ by $\Row_{\mathbb{K}}(\B)$.
The Frobenius norm of a tensor is given by $\|\cU\|_{F}^2 = \sum_{(i_1, \dots, i_N)} \cU_{(i_1, \dots, i_N)}^2$.

\section{Nonzero Rank-1 Tensors in Exponential Form} 

In this section, we describe a characterization of rank-1 tensors in $\NZT$, which is at the core of Algorithm~\ref{pdcode:exactrank1tc}. 
Suppose that $\cU \in \NZT$ satisfies~\eqref{eqn:polynomial} for some column vectors $\u_1, \dots, \u_N \subseteq \R^d$.
Let us define $\x \in \R^{dN}$ as
\begin{equation} \label{eqn:decision-variable}
    \x := \begin{pmatrix}
         \u_1 \\ \u_2 \\ \vdots \\ \u_N 
    \end{pmatrix} = \begin{pmatrix}
         \begin{pmatrix}
             (\u_1)_1 \\ \vdots \\ (\u_1)_d
         \end{pmatrix} \\ \vdots \\ \begin{pmatrix}
             (\u_N)_1 \\ \vdots \\ (\u_N)_d
         \end{pmatrix}
    \end{pmatrix}.
\end{equation}

Setting this aside, we focus on $\cU$, which we can re-write as follows for all $(i_1,\dots,i_N) \in [d]^N$.
\begin{align*} 
    \cU_{(i_1, i_2, \dots, i_N)} &= \sgn \left( \prod\nolimits_{\ell = 1}^{N} (\mathbf{u}_{\ell})_{i_{\ell}} \right) \, \left| \prod\nolimits_{\ell = 1}^{N} (\mathbf{u}_{\ell})_{i_{\ell}} \right| \\
    &=   \left( \prod\nolimits_{\ell = 1}^{N} \sgn \left( (\mathbf{u}_{\ell})_{i_{\ell}} \right) \right)  \, \left(  \exp \left( \sum\nolimits_{\ell=1}^{N}  \log \left|(\mathbf{u}_{\ell})_{i_{\ell}} \right| \right) \right) \\
    :&= \cU'_{(i_1, i_2, \dots, i_N)} \,  \exp\left(\cU''_{(i_1, i_2, \dots, i_N)}\right) \numberthis \label{eqn:decomp}
\end{align*}
Noting that $\cU'_{(i_1, i_2, \dots, i_N)} = -1$ iff an odd number of the variables $(\mathbf{u}_{1})_{i_1} \dots (\mathbf{u}_{N})_{i_N}$ are negative, we observe that the entry $\cU'_{(i_1, i_2, \dots, i_N)}$ resembles the parity function on the sign of these variables.
Hence, using an appropriate 1-1 transformation $\varphi$\footnote{e.g. $\varphi(z) := -\frac{1}{2}(z-1)$} between $\{\pm 1\}$ and \{0, 1\}, we obtain that the $\{\mathbf{u}_{\ell}\}_{\ell=1}^{N}$ solve the following linear systems over $\GF$ and $\R$:
\begin{align*}
     \sum\nolimits_{\ell=1}^N (\varphi \circ \sgn)( (\mathbf{u}_{\ell})_{i_{\ell}}) &= (\varphi \circ \sgn) \left(\cU_{(i_1, i_2, \dots, i_N)} \right) \mod 2\\
     \sum\nolimits_{\ell=1}^N (\log \circ \, \text{abs}) ((\mathbf{u}_{\ell})_{i_{\ell}} ) &= (\log \circ \, \text{abs}) \left( \cU_{(i_1, i_2, \dots, i_N)} \right)
\end{align*}

We now describe the coefficient matrix of these systems.
Let $\pi: [d]^N \rightarrow [d^N]$ denote a fixed bijection throughout.
Defining $\A_{\pi(i_1, \dots, i_N)} := [ \e_{i_1}, \e_{i_2}, \dots, \e_{i_N}]$ for all $(i_1, \dots, i_N) \in [d]^N$, we have
\begin{align*}
    \A_{\pi(i_1, \dots, i_N)} (\varphi \circ \sgn )(\x) &= (\varphi \circ \sgn) \left(\cU_{(i_1, i_2, \dots, i_N)} \right) \mod 2 \\
    \A_{\pi(i_1, \dots, i_N)} (\log \circ \, \text{abs})(\x) &= (\log \circ \, \text{abs}) \left(\cU_{(i_1, i_2, \dots, i_N)} \right).
\end{align*}
Hence, there is a unique 1-1 tensor-to-vector map $\vect_{\pi}$ such that the above is equivalent to
\begin{align*}
    \A (\varphi \circ \sgn )(\x) &= (\varphi \circ \sgn) \left( \vect_{\pi}\cU \right) \mod 2 \\
    \A (\log \circ \, \text{abs})(\x) &= (\log \circ \, \text{abs}) \left( \vect_{\pi} \cU\right).
\end{align*}
To avoid overloading notation throughout, we let $f := \varphi \circ \sgn \circ \vect_{\pi}$ and $\Tilde{f} := \log \circ 
 \, \text{abs} \circ \vect_{\pi}$, so
\begin{align*}
    \A (\varphi \circ \sgn )(\x) &= f(\cU) \mod 2 \\
    \A (\log \circ \, \text{abs})(\x) &= \Tilde{f} (\cU).
\end{align*}

Notably, $\A$'s rows enumerate all $d^N$ possible vectors of size $dN$ obtained by concatenating $N$ row vectors from $\{\e_k\}_{k \in [d]}$.
This observation enables a simple proof of a result we use extensively---that $\A$ has the same rank considered as a matrix over $\GF$ or $\R$.
\begin{lemma}\label{lemma:rankA}
    The matrix $\A$ satisfies $\rank_{\GF}(\A) = \rank_{\mathbb{R}}(\A) = dN - (N-1)$.
\end{lemma}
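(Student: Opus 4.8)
The plan is to compute the right null space of $\A$ over an arbitrary field $\mathbb{K}$ and to observe that the answer does not depend on $\mathbb{K}$. Since $\A$ has exactly $dN$ columns, we have $\rank_{\mathbb{K}}(\A) = dN - \dim_{\mathbb{K}}(\ker \A)$, so it suffices to show $\dim_{\mathbb{K}}(\ker \A) = N-1$ uniformly in $\mathbb{K}$, and then specialize to $\mathbb{K} \in \{\GF, \R\}$.

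Write a generic vector of $\mathbb{K}^{dN}$ as $\x = (\u_1, \dots, \u_N)$ with $\u_\ell \in \mathbb{K}^d$, matching the block structure of~\eqref{eqn:decision-variable}. By the definition $\A_{\pi(i_1,\dots,i_N)} = [\e_{i_1}, \dots, \e_{i_N}]$, the row of $\A$ indexed by $\pi(i_1,\dots,i_N)$ applied to $\x$ returns $\sum_{\ell=1}^N (\u_\ell)_{i_\ell}$. Hence $\x \in \ker \A$ if and only if $\sum_{\ell=1}^N (\u_\ell)_{i_\ell} = 0$ for every $(i_1,\dots,i_N) \in [d]^N$.

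First I would argue that each block $\u_\ell$ is forced to be a constant vector: fix a coordinate $\ell$ and any $a, b \in [d]$, and use the fact (noted just before the lemma) that $\A$ enumerates all $d^N$ index tuples, so the two rows with $i_\ell \in \{a,b\}$ and all other coordinates set to $1$ are both present; subtracting the corresponding equations gives $(\u_\ell)_a = (\u_\ell)_b$. Thus $\u_\ell = c_\ell \mathbf{1}_d$ for some scalar $c_\ell \in \mathbb{K}$, and substituting back collapses the entire system to the single equation $\sum_{\ell=1}^N c_\ell = 0$. Therefore $\ker \A$ is linearly isomorphic (via $\x \mapsto (c_1,\dots,c_N)$) to the hyperplane $\{\c \in \mathbb{K}^N : \sum_{\ell} c_\ell = 0\}$, whose dimension over any field is $N-1$ (the degenerate cases $d=1$ and $N=1$ are consistent with this). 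Consequently $\rank_{\mathbb{K}}(\A) = dN - (N-1)$ for every $\mathbb{K}$, and in particular for $\GF$ and $\R$.

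I do not anticipate a genuine obstacle here; the one point requiring care is the legitimacy of the "difference of two rows" step, i.e. that for every $\ell$ and every pair $a,b$ the two needed rows actually occur in $\A$, which is precisely the defining enumeration property of $\A$. The secondary point worth stating explicitly is that nothing in the derivation exploited any structure particular to $\GF$ or $\R$ — the description of the kernel, and hence its dimension, is field-independent — which is exactly what yields the equality of the two ranks for free.
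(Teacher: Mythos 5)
Your proof is correct, and it takes a genuinely different route from the paper's. The paper proves the lemma by constructing an explicit $(d-1)N+1$-row matrix $\Phi$ (Equation~\eqref{eqn:phi-matrix}) with one designated row of $\A$ on top and block-diagonal rows of the form $[\o,\dots,\e_1+\e_k,\dots,\o]$ below, then separately shows that $\Row_{\GF}(\Phi)=\Row_{\GF}(\A)$ (Claim~\ref{claim:samerowspace}) and that $\Phi$ has full row rank (Claim~\ref{claim:phi-rowrank}), and finally repeats the argument over $\R$ with a sibling matrix $\tilde\Phi$. You instead compute $\ker_{\mathbb{K}}(\A)$ directly over an arbitrary field: the membership condition $\sum_{\ell}(\u_\ell)_{i_\ell}=0$ for all tuples forces each block to be constant (by differencing two rows that disagree only in one coordinate), and the system collapses to the single hyperplane constraint $\sum_\ell c_\ell=0$, giving kernel dimension $N-1$ and hence rank $dN-(N-1)$ uniformly in $\mathbb{K}$. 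Your argument is shorter and handles both fields in one stroke by rank--nullity, whereas the paper's more constructive route is not gratuitous: the matrix $\Phi$ it builds yields Corollary~\ref{corollary:helper}, an explicit basis of $\Row_{\GF}(\A)$, which is reused later in the proof of Claim~\ref{claim:maximal-subset}. If one only cared about the rank value itself, your kernel computation would be the cleaner choice; the paper trades some brevity here for a reusable structural object downstream.
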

\begin{proof}
    See Section~\ref{section:proof-rank-A}.
\end{proof}

 In summary of the previous observations, we have the following.

 \begin{observation}\label{observation:decomposition_lemma}
    For any tensor $\cU \in  \otimes_{i=1}^{N} \mathbb{R}_{\neq 0}^d$,
    \begin{equation}\label{eqn:rank1-iff-consistent}
        \cU \text{ is rank-$1$} \implies \text{linear systems } \system{\A}{f(\cU)}{\GF} \text{ and }\system{\A}{\Tilde{f}(\cU)}{\mathbb{R}} \text{ are both consistent}
    \end{equation}
    Furthermore, if for a rank-$1$ tensor $\mathcal{T} \in \otimes_{i=1}^{N} \mathbb{R}_{\neq 0}^d$ the joint solution sets of
    \begin{equation}\label{eqn:joint-solution-sets}
        \left(\system{\A}{f(\mathcal{T})}{\GF}, \system{\A}{\Tilde{f}(\mathcal{T})}{\mathbb{R}}\right) \text{ and } \left(\system{\A}{f(\cU)}{\GF}, \system{\A}{\Tilde{f}(\cU)}{\mathbb{R}} \right) 
    \end{equation}
    are equivalent, then $\cU = \mathcal{T}$.
 \end{observation}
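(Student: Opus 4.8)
The plan is to unwind the definitions established in the derivation immediately preceding the statement. For the first implication: suppose $\cU \in \NZT$ is rank-$1$, so that $\cU = \u_1 \otimes \dots \otimes \u_N$ for some $\u_1, \dots, \u_N \in \R^d$, all of whose entries are nonzero (this is forced by $\cU \in \NZT$). Form the vector $\x$ as in~\eqref{eqn:decision-variable}. The chain of algebraic identities in~\eqref{eqn:decomp} and the paragraphs following it shows exactly that $\x$, after applying $\varphi \circ \sgn$ entrywise, satisfies $\A (\varphi \circ \sgn)(\x) = f(\cU) \bmod 2$, and after applying $\log \circ \,\mathrm{abs}$ entrywise, satisfies $\A (\log \circ\, \mathrm{abs})(\x) = \Tilde f(\cU)$. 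In particular $(\varphi \circ \sgn)(\x)$ is a genuine $\GF$-solution of $\system{\A}{f(\cU)}{\GF}$ and $(\log \circ\, \mathrm{abs})(\x)$ is a genuine $\R$-solution of $\system{\A}{\Tilde f(\cU)}{\R}$, so both systems are consistent. The only point worth stating carefully is that $\log \circ\,\mathrm{abs}$ is well-defined on each coordinate precisely because the entries are nonzero — which is why the hypothesis $\cU \in \NZT$ is needed.

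For the second (uniqueness) claim: assume $\mathcal{T}, \cU \in \NZT$ are both rank-$1$ and that the joint solution sets in~\eqref{eqn:joint-solution-sets} coincide. By the first part applied to $\cU$, there is a vector $\x_{\cU}$ (built from the factors of $\cU$) with $(\varphi \circ \sgn)(\x_{\cU})$ solving the $\GF$-system for $\cU$ and $(\log \circ\,\mathrm{abs})(\x_{\cU})$ solving the $\R$-system for $\cU$. Since these solution sets equal those for $\mathcal{T}$, the same pair of vectors solves the $\GF$- and $\R$-systems associated to $\mathcal{T}$. Reading the two matrix equations row by row: for every $(i_1,\dots,i_N)$, row $\pi(i_1,\dots,i_N)$ of $\A$ is $[\e_{i_1},\dots,\e_{i_N}]$, so the $\GF$-equation at that row says the parity-encoded sign of $\mathcal{T}_{(i_1,\dots,i_N)}$ equals the corresponding combination of sign bits of $\x_{\cU}$, which is in turn (again by the first part) the parity-encoded sign of $\cU_{(i_1,\dots,i_N)}$; and the $\R$-equation says $\log|\mathcal{T}_{(i_1,\dots,i_N)}| = \log|\cU_{(i_1,\dots,i_N)}|$. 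Combining: every entry of $\mathcal{T}$ has the same sign and the same absolute value as the corresponding entry of $\cU$, hence $\mathcal{T} = \cU$. Here I use that $f = \varphi\circ\sgn\circ\vect_{\pi}$ and $\Tilde f = \log\circ\,\mathrm{abs}\circ\vect_{\pi}$ are, on $\NZT$, compositions of injective maps on each coordinate ($\varphi$ is a bijection $\{\pm1\}\to\{0,1\}$, $\sgn$ is injective on $\R_{\neq 0}$, $\log\circ\,\mathrm{abs}$ is injective on each sign-class), so recovering $f(\mathcal{T})=f(\cU)$ and $\Tilde f(\mathcal{T})=\Tilde f(\cU)$ entrywise lets one invert $\vect_{\pi}$ and conclude.

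I do not expect a genuine obstacle here; the statement is essentially a bookkeeping consolidation of the preceding derivation. The one step requiring a little care is making precise the notion of "the joint solution sets are equivalent" and pinning down that equality of solution sets, together with consistency witnessed by an explicit $\x_{\cU}$ coming from a valid rank-$1$ factorization, forces agreement of the right-hand sides $f(\cdot)$ and $\Tilde f(\cdot)$ — one must note that a solution of a consistent system determines the right-hand side (via $\A$ times the solution), so two systems with the same coefficient matrix $\A$ and a common solution necessarily have the same right-hand side. From there the injectivity of $\vect_{\pi}$, $\varphi$, $\sgn|_{\R_{\neq 0}}$, and $\log\circ\,\mathrm{abs}$ on each sign-class closes the argument. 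Lemma~\ref{lemma:rankA} is not needed for this observation itself, but I would remark that it guarantees the solution sets are nonempty affine subspaces of the expected dimension, which is what makes the "equivalent solution sets" hypothesis natural to check algorithmically.
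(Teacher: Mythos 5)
Your proposal is correct and follows essentially the same route as the paper: the first implication is exactly the derivation preceding the statement, and the uniqueness claim is the paper's reconstruction formula $\cU = \vect_{\pi}^{-1}(\varphi^{-1}(\A \y_1) \odot \exp( \A \y_2))$ restated as ``a common joint solution forces equal right-hand sides $f(\cU)=f(\mathcal{T})$, $\Tilde f(\cU)=\Tilde f(\mathcal{T})$, hence equal tensors.'' (One small phrasing slip: $\sgn$ is of course not injective on $\R_{\neq 0}$; what you actually use, correctly, is that sign together with magnitude determines each entry.)
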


Indeed,~\eqref{eqn:joint-solution-sets} is the consequence of the following procedure.
Supposing we had access to a solution of $\y_1$ of $\system{\A}{f(\cU)}{\GF}$, and a solution $\y_2$ of $\system{\A}{\Tilde{f}(\cU)}{\mathbb{R}}$, then we could recover the tensor from
\begin{equation*}
    \cU = \vect_{\pi}^{-1}(\varphi^{-1}(\A \y_1) \odot \exp( \A \y_2)),
\end{equation*}
where $\odot$ denotes the Hadamard product.
Therefore, the crux of the issue is accessing these solutions.
Fortunately, as we'll formalize, this is can be achieved by observing sufficiently many random entries of $\cU$.
\\

To preface, for each subset $S \subseteq [d]^N$, we define the \emph{row selection matrix} $\D_S$ as the matrix such that
\begin{equation}\label{eqn:selection-matrix}
    \D_S \A = \begin{pmatrix}
        \vdots \\ \A_{\pi(i_1, \dots, i_N)} \\ \vdots
    \end{pmatrix} \quad \forall (i_1, \dots, i_N) \in S,
\end{equation}
where the rows adhere to the ordering induced by $\pi$.
We similarly define $\D_{\bar{S}}$ for the complement $\bar{S} := [d]^N \setminus S$.
As a result, there is always a row permutation matrix $\P_S \in \{0,1\}^{d^N \times d^N} $ where
    \begin{equation}\label{eqn:permutation}
        \mathbf{P}_{S} \A = \begin{pmatrix}
        \D_{S} \A \\ \D_{\bar{S}}\A 
        \end{pmatrix}.
    \end{equation}
As we've established, since each entry corresponds with a particular row of the augmented systems, we have the next observation.
\begin{observation}\label{observation:isometry}
    Let $\cU \in \NZT$ be rank-1 tensor, and let $S \subseteq [d]^{N}$.
    Then the subset of entries $\{\cU_{(i_1, \dots, i_N)}\}_{(i_1, \dots, i_N) \in S}$ is isomorphic to the pair of linear systems
    \begin{equation}\label{eqn:sub-system}
        \system{\D_{S} \A}{ \D_{S}f(\cU)}{\GF}, \quad \system{\D_{S} \A}{\D_{S}\Tilde{f}(\cU)}{\mathbb{R}}.
    \end{equation}
\end{observation}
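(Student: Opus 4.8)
The plan is to verify directly that the displayed correspondence is a bijection, by chasing the definitions of $\D_S$, $\vect_{\pi}$, $f$, and $\tilde f$; no new idea beyond the decomposition~\eqref{eqn:decomp} is required. The first thing I would record is that the coefficient matrix $\D_S\A$ is a deterministic object depending only on $S$: by~\eqref{eqn:selection-matrix} its rows are exactly the vectors $\A_{\pi(i_1,\dots,i_N)}=[\e_{i_1},\e_{i_2},\dots,\e_{i_N}]$ ranging over $(i_1,\dots,i_N)\in S$, and in particular it does not depend on $\cU$. Consequently, specifying the pair of augmented systems in~\eqref{eqn:sub-system} is the same as specifying the two right-hand-side vectors $\D_S f(\cU)$ and $\D_S\tilde f(\cU)$, since for fixed $S$ the left block $\D_S\A$ is common to every tensor.

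Next I would set up the index bookkeeping. For each $(i_1,\dots,i_N)\in S$ there is a distinguished row of $\D_S\A$, and by the definitions $f=\varphi\circ\sgn\circ\vect_{\pi}$ and $\tilde f=\log\circ\,\text{abs}\circ\vect_{\pi}$, the entry of $\D_S f(\cU)$ in that row equals $(\varphi\circ\sgn)(\cU_{(i_1,\dots,i_N)})$ and the entry of $\D_S\tilde f(\cU)$ in that row equals $(\log\circ\,\text{abs})(\cU_{(i_1,\dots,i_N)})$. Hence the map sending the tuple $\{\cU_{(i_1,\dots,i_N)}\}_{(i_1,\dots,i_N)\in S}$ to the pair of right-hand sides acts coordinatewise, by $z\mapsto\big((\varphi\circ\sgn)(z),\,(\log\circ\,\text{abs})(z)\big)$.

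It then remains to check that this coordinatewise map is a bijection from $\R_{\neq 0}$ onto $\varphi(\{\pm1\})\times\R=\{0,1\}\times\R$. This holds because $\varphi$ is a bijection between $\{\pm1\}$ and $\{0,1\}$, and because a nonzero real $z$ is uniquely recovered from $(\sgn z,\log|z|)$ via $z=\sgn(z)\exp(\log|z|)$ --- which is precisely the reconstruction identity $\cU=\vect_{\pi}^{-1}(\varphi^{-1}(\A\y_1)\odot\exp(\A\y_2))$ read off on the coordinates indexed by $S$. Assembling these coordinatewise bijections over all $(i_1,\dots,i_N)\in S$ produces the claimed isomorphism between the observed subset of entries and the pair of restricted systems; note that the rank-$1$ hypothesis plays no role here beyond placing $\cU$ in $\NZT$ so that every coordinate is nonzero.

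The only point requiring care --- and it is genuine bookkeeping rather than a real obstacle --- is keeping the three enumerations consistent: the rows of $\D_S\A$, the coordinates of $\D_S f(\cU)$ and $\D_S\tilde f(\cU)$, and the tuples in $S$ must all be listed according to the same restriction of the bijection $\pi$, which is exactly what~\eqref{eqn:selection-matrix} (and the ordering convention stated there) stipulates. Once that convention is fixed, the argument reduces to a one-line verification per coordinate, so I would present it compactly.
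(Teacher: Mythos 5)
Your proof is correct and matches the paper's treatment, which states this observation without a separate proof, regarding it as immediate from the construction of $\A$, $f$, $\tilde f$, and $\D_S$ in the preceding paragraphs. You have simply made explicit the coordinatewise bijection $z\mapsto((\varphi\circ\sgn)(z),\log|z|)$ on $\R_{\neq 0}$ and the $\pi$-induced row indexing that the paper leaves implicit.
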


Combining Observation~\ref{observation:isometry} and Observation~\ref{observation:decomposition_lemma}, it is immediate that $S$ just needs to satisfy that~\eqref{eqn:sub-system} has the same joint solution set as $ \left(\system{\A}{f(\cU)}{\GF}, \quad \system{\A}{\Tilde{f}(\cU)}{\mathbb{R}} \right)$
to complete the tensor via the aforementioned procedure.

In Section~\ref{section:proof-rank-A}, we show the sufficient conditions are $\Row_{\GF}(\D_S \A) = \Row_{\GF}( \A)$ and $\Row_{\mathbb{R}}(\D_S \A) = \Row_{\mathbb{R}}(\A)$.
In other words, given that these hold, any joint solution to~\eqref{eqn:sub-system} yields a joint solution of the overall systems.
The exact procedure is described in the pseudocode for Algorithm~\ref{pdcode:exactrank1tc}.
\\

\begin{algorithm} 
\caption{Gauss-Jordan-Solver}
\label{pdcode:exactrank1tc}
\begin{algorithmic}[1]
\item[\textbf{Input:}] Tensor entries $\{\cU_{(i_1, \dots, i_N)}\}_{(i_1, \dots, i_N) \in S}$ for $S \subseteq [d]^N$
\vspace{0.5em}
\item[\textbf{Output:}] $\hat{\cU}_{(i_1, \dots, i_N)}$ for each desired $(i_1, \dots, i_N) \in [d]^N$
\vspace{0.5em}
\State $\y_1 \gets $ any solution $\y$ to $\D_{S}\A \y = \D_{S}{f(\cU)}$ over $\mathbb{F}_{2}$
\vspace{0.5em}
\State $\y_2 \gets $ any solution $\y$ to $\D_{S}\A \y = \D_{S}{\Tilde{f}(\cU)}$ over $\mathbb{R}$
\vspace{0.5em}
\State \textbf{return} $\hat{\cU}_{(i_1, \dots, i_N)} = \varphi^{-1}(\inner{\A_{\pi(i_1,\dots,i_N)}}{\y_1}) \exp(\inner{\A_{\pi(i_1,\dots,i_N)}}{\y_2})$ for $(i_1, \dots, i_N) \in [d]^N$
\end{algorithmic}
\end{algorithm}

Hence, we can use Algorithm~\ref{pdcode:exactrank1tc} to solve \textbf{Rank-1 Completion} by simply running it for a randomly drawn $S$.
Clearly, we only need to show these conditions hold with sufficient probability.
We prove that $O ((dN)^2 \log d)$ samples suffice, although we conjecture it can be improved to $ O(dN \log dN)$.
Since the result below is crucial for Theorem~\ref{theorem:main-result}, we overview our proof technique.

\begin{lemma}\label{lemma:rowcollector}
        Let $S$ be the subset of indices induced by $m$ uniformly drawn rows of $\A$, with replacement.
        Then, $m \lesssim (dN)^2 \log d$ samples suffice to ensure both 
        \begin{equation}
            \Row_{\GF}(\D_S\A) = \Row_{\GF}(\A), \quad \Row_{\mathbb{R}}(\D_S \A) = \Row_{\mathbb{R}}(\A)
        \end{equation}
        simultaneously hold with probability $\geq 2/3$.
\end{lemma}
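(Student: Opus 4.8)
The plan is to bound the number of uniform samples needed so that the selected rows of $\A$ span its full rowspace over \emph{both} $\GF$ and $\R$ simultaneously. I will handle the two fields separately and then union-bound. The key structural fact, from Lemma~\ref{lemma:rankA}, is that $\rank_{\GF}(\A) = \rank_{\R}(\A) = r$ where $r := dN - (N-1)$; moreover the rows of $\A$ are exactly the $d^N$ concatenations of $N$ standard basis (row) vectors from $\{\e_k\}_{k\in[d]}$, so the uniform distribution on rows of $\A$ is simply the product of $N$ independent uniform draws from $[d]$.

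First, for the $\GF$ case, I would argue via a rank-increment / coupon-collector-style bound. Let $V_t$ be the $\GF$-span of the first $t$ sampled rows. Whenever $\dim_{\GF} V_t < r$, I claim the probability that a fresh uniform row lies outside $V_t$ is at least $1/d^N$ — trivially true since $V_t$ misses at least one of the $d^N$ rows of $\A$. That crude bound would only give $O(r d^N)$ samples, which is far too weak. The better approach is the standard sketching argument: since each sampled row is uniform over the rows of $\A$, and the rows of $\A$ themselves span a space of dimension $r$, a fresh row fails to increase the rank with probability at most... — here I need a \emph{lower bound} on the probability of hitting a ``new'' direction. The clean way is: condition on $V_t \subsetneq \Row_{\GF}(\A)$, pick any linear functional $\phi$ on $\Row_{\GF}(\A)$ vanishing on $V_t$ but not identically zero; then a fresh row increases the rank whenever $\phi$ is nonzero on it, and the fraction of rows of $\A$ on which a fixed nonzero linear functional is nonzero is at least $1/d$ (this is where I expect the main work: bounding, for any nonzero $\phi \in (\R^{dN})^*$ or $(\GF^{dN})^*$ restricted to $\Row(\A)$, the fraction of index tuples $(i_1,\dots,i_N)$ with $\sum_\ell \phi_\ell(\e_{i_\ell}) \neq 0$ from below by something like $1/d$, using the block structure of $\A$ and the fact that $\phi$ is nonzero on \emph{some} block). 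Granting a lower bound of $c/d$ on this fraction, the expected number of samples to climb each of the $r$ rank levels is $O(d)$, and a coupon-collector/Chernoff argument gives $O(d r \log r) = O((dN)^2 \log d)$ samples to reach full rank with probability $\geq 5/6$, say. This simultaneously dominates both fields since the argument is field-agnostic given the $c/d$ fraction bound.

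The main obstacle, as flagged above, is establishing the ``$c/d$ fraction'' anti-concentration claim: for every vector $\b$ in the rowspace of $\A^\top$ (equivalently, every $\GF$- or $\R$-linear functional on $\R^{dN}$ not vanishing on $\Row(\A)$), at least a $1/d$ fraction of the $d^N$ rows $[\e_{i_1},\dots,\e_{i_N}]$ of $\A$ have nonzero inner product with it. I would prove this by writing the functional as $(\c_1,\dots,\c_N)$ with $\c_\ell \in \GF^d$ (resp. $\R^d$), so the inner product with row $\pi(i_1,\dots,i_N)$ is $\sum_\ell (\c_\ell)_{i_\ell}$; the constraint ``vanishes on $\Row(\A)$'' exactly characterizes the degenerate functionals, and for a non-degenerate one I argue that fixing all but one coordinate $i_{\ell^*}$ (where $\c_{\ell^*}$ is not a constant vector) makes the sum take a nonzero value for at least one choice of $i_{\ell^*}$ — over $\GF$ this needs a little more care since ``$\c_{\ell^*}$ nonconstant'' must be extracted from non-degeneracy, but the kernel description from the proof of Lemma~\ref{lemma:rankA} (the $(N-1)$-dimensional space of ``constant-shift'' relations) should give exactly this. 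Then a counting argument over the $d^{N-1}$ settings of the other coordinates yields the $1/d$ fraction. Finally I would union-bound the $\GF$-failure event and the $\R$-failure event, each at most $1/6$, to conclude both rowspace equalities hold with probability $\geq 2/3$ once $m \lesssim (dN)^2 \log d$.
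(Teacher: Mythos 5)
Your proposal is correct and takes a genuinely different route from the paper in two of its three components. For the key ``$1/d$-fraction'' claim, the paper (Claim~\ref{claim:maximal-subset}) works primally with the subspace $W = V_t$: it finds a vector $\phi_i^n = [\o,\dots,\e_1+\e_i,\dots,\o]$ outside $W$ and counts the $d^{N-1}$ unordered pairs of rows of $\A$ summing to $\phi_i^n$, arguing one member of each pair must lie outside $W$. You instead work dually with a linear functional $\phi = (\c_1,\dots,\c_N)$ vanishing on $V_t$ but not on $\Row(\A)$, observing that $\phi$'s value on a row is $\sum_\ell (\c_\ell)_{i_\ell}$ and that some block $\c_{\ell^*}$ must be nonconstant (your sketch needs the small patch that the ``all blocks constant, nonzero sum'' case gives nonzero on \emph{all} rows, and is in any case excluded once $V_t$ contains a row of $\A$); then fixing the other $N-1$ coordinates and varying $i_{\ell^*}$ hits at least one nonzero value, giving $\geq d^{N-1}$ rows. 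These are equivalent but different in spirit. For the concentration step, the paper runs a Markov-chain trajectory-counting argument (bounding the number of loopless ``bad'' paths by $d^{N\rank(\A)}$ and multiplying by a per-self-loop penalty $1-1/d$), while you stochastically dominate each rank increment by a geometric random variable with success probability $\geq 1/d$ and union-bound over the $r=dN-(N-1)$ levels; this is simpler and in fact yields the slightly tighter $O(dr\log r) = O(d^2N\log(dN))$, which is within the stated $O((dN)^2\log d)$. Finally, for handling the two fields, you propose a union bound of two failure events each $\leq 1/6$; this works, but the paper's reduction via Fact~\ref{fact:r-gf2-rank} is cleaner: since $\rank_{\GF}(\D_S\A)\le\rank_{\R}(\D_S\A)\le\rank_{\R}(\A)=\rank_{\GF}(\A)$, the $\GF$ rowspace equality automatically implies the $\R$ one, so one needs only control a single event.
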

\begin{proof}
    See Section~\ref{sec:rowcollector}.
\end{proof} 

At a high-level, our proof views $\Row_{\GF}(\D_S \A)$ as the result of a sequentially constructed subspace $W$ of $\A$'s rowspace.
We correspond the sample paths of this random process to trajectories on a Markov chain whose states are indexed by $(\dim W, W)$.
Due to the first state coordinate, the chain jumps to a new state no more than $\rank_{\GF}(\A)$ times before hitting the absorbing state $(\rank_{\GF}(\A), \Row_{\GF}(\A))$.
As a result, the measure of the "bad" event $\{\Row_{\GF}(\D_S \A) \neq \Row_{\GF}(\A)\}$ is given by the cumulative measure of the "bad" trajectories, i.e. those that stagnate and never hit the absorbing state.

It turns out to be very easy to prove the chain self-loops with probability $ \leq 1-1/d$.
By pigeonholing, every "bad" trajectory of length $T > \rank_{\GF}(\A)$  has $\Omega(T-\rank_{\GF}(\A))$ self-loops.
Thus, the measure of each "bad" trajectory shrinks exponentially with rate rate $\frac{1}{d}\Omega(T-\rank_{\GF}(\A))$.
A counting argument shows there are no more than $d^{N\rank_{\GF}(\A)}$ "bad" trajectories, from which we show $O(d \log d^{N \rank_{\GF}(\A)}) = O((dN)^2 \log d)$ samples suffice.
To handle the other condition (over $\mathbb{R}$), we show it readily follows from the next well-known fact in tandem with Lemma~\ref{lemma:rankA}.
\begin{fact}\label{fact:r-gf2-rank}
    For any binary matrix $\mathbf{B}$, $\rank_{\GF}(\mathbf{B}) \leq \rank_{\mathbb{R}}(\mathbf{B})$.
\end{fact}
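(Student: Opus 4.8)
The plan is to deduce the inequality from the behavior of a maximal nonvanishing minor under reduction modulo $2$. First I would set $r := \rank_{\GF}(\mathbf{B})$ and invoke the standard characterization of rank over a field: there is an $r \times r$ submatrix $\mathbf{M}$ of $\mathbf{B}$ (obtained by deleting some rows and columns) whose determinant, computed over $\GF$, is nonzero, hence equal to $1$. Because $\mathbf{B}$ is binary, the entries of $\mathbf{M}$ are the integers $0$ and $1$, and $\det$ is a fixed polynomial with integer coefficients in the matrix entries; therefore the integer $\det_{\mathbb{Z}}(\mathbf{M})$ reduces modulo $2$ to $\det_{\GF}(\mathbf{M}) = 1$, so $\det_{\mathbb{Z}}(\mathbf{M})$ is odd, in particular nonzero as a real number. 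Thus $\mathbf{B}$ contains an $r \times r$ submatrix of full real rank, which gives $\rank_{\mathbb{R}}(\mathbf{B}) \geq r = \rank_{\GF}(\mathbf{B})$.

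An equivalent route works directly with rows: choose rows $\mathbf{B}_{i_1}, \dots, \mathbf{B}_{i_r}$ that are $\GF$-linearly independent and show they remain $\mathbb{R}$-linearly independent. Any nontrivial real dependence $\sum_{j} c_j \mathbf{B}_{i_j} = \mathbf{0}$ lies in the null space of an integer matrix, hence may be taken rational and then, after clearing denominators and dividing out the content, integral with $\gcd_j(c_j) = 1$; reducing modulo $2$ would force every $c_j$ to be even by $\GF$-independence, contradicting the gcd being $1$.

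Neither argument presents a genuine obstacle — this is the classical statement that rank cannot increase under a ring quotient, here $\mathbb{Z} \twoheadrightarrow \GF$. The only point to state carefully is that reduction modulo $2$ is a ring homomorphism, so it commutes with determinants (respectively with forming linear combinations), together with the trivial observation that a nonzero element of $\GF$ lifts to a nonzero integer. I would present the minor-based version since it is shortest, and note that this Fact is then combined with Lemma~\ref{lemma:rankA} in the proof of Lemma~\ref{lemma:rowcollector} to transfer the $\GF$ row-space condition to the real one.
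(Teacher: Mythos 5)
The paper states this fact without proof, treating it as a well-known result, so there is no proof of record to compare against. Your argument is correct, and both variants you sketch are standard: the key point in each is that reduction modulo $2$ is a ring homomorphism from $\mathbb{Z}$ to $\GF$, so a $\GF$-nonvanishing determinant lifts to an odd (hence nonzero real) integer determinant, or equivalently a primitive integer relation among the rows would reduce to a nontrivial $\GF$-relation. Either version is a perfectly acceptable justification for the fact as used in the proof of Lemma~\ref{lemma:rowcollector}.
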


In the remaining sections, we build up towards Theorem~\ref{theorem:main-result}, beginning with establishment of $\A$'s rank in the next section.

\newpage

\section{Proof of Lemma~\ref{lemma:rankA}}\label{section:proof-rank-A}
To give a preview, our strategy is to construct a considerably simpler matrix whose rowspace is identical to $\A$'s.
We then show that this matrix has the claimed rank.
Additionally, the structure of these matrices will be a useful reference to streamline the proofs in the next sections.
We now provide details of the proof.
\\

To start, fix an arbitrary row $ [\e_{i_1}, \dots, \e_{i_N}] \in \{\A_1, \dots, \A_{d^N} \}$, and let $\Phi \in \{0,1\}^{(d-1)N+1 \times dN}$ be the following $\GF$-valued matrix:

\begin{equation}\label{eqn:phi-matrix}
    \Phi := \left( \begin{array}{cccccc}
    \e_{i_1} & \e_{i_2} & \e_{i_3} & \cdots & \e_{i_N} & \\[0.3em]
    \hline \\[-0.8em]
    \e_1 + \e_2 & \textcolor{lightgray}{\o} & \textcolor{lightgray}{\o} & \cdots & \textcolor{lightgray}{\o} & \\
    \e_1 + \e_3 & \textcolor{lightgray}{\o} & \textcolor{lightgray}{\o} & \cdots & \textcolor{lightgray}{\o} & \\
    \vdots & \vdots & \vdots & \ddots & \vdots & \\
    \e_1 + \e_d & \textcolor{lightgray}{\o} & \textcolor{lightgray}{\o} & \cdots & \textcolor{lightgray}{\o} & \\[0.3em]
    \hline \\[-0.8em]
    \textcolor{lightgray}{\o} & \e_1 + \e_2 & \textcolor{lightgray}{\o} & \cdots & \textcolor{lightgray}{\o} & \\
    \textcolor{lightgray}{\o} & \e_1 + \e_3 & \textcolor{lightgray}{\o} & \cdots & \textcolor{lightgray}{\o} & \\
    \vdots & \vdots & \vdots & \ddots & \vdots & \\
    \textcolor{lightgray}{\o} & \e_1 + \e_d & \textcolor{lightgray}{\o} & \cdots & \textcolor{lightgray}{\o} & \\[0.3em]
    \hline \\[-0.8em]
    \textcolor{lightgray}{\o} & \textcolor{lightgray}{\o} & \e_1 + \e_2 & \cdots & \textcolor{lightgray}{\o} & \\
    \textcolor{lightgray}{\o} & \textcolor{lightgray}{\o} & \e_1 + \e_3 & \cdots & \textcolor{lightgray}{\o} & \\
    \vdots & \vdots & \vdots & \ddots & \vdots & \\
    \textcolor{lightgray}{\o} & \textcolor{lightgray}{\o} & \e_1 + \e_d & \cdots & \textcolor{lightgray}{\o} & \\[0.3em]
    \hline \\[-0.8em]
    \vdots & \vdots & \vdots & \ddots & \vdots & \\[0.3em]
    \hline \\[-0.8em]
    \textcolor{lightgray}{\o} & \textcolor{lightgray}{\o} & \textcolor{lightgray}{\o} & \cdots & \e_1 + \e_2 & \\
    \textcolor{lightgray}{\o} & \textcolor{lightgray}{\o} & \textcolor{lightgray}{\o} & \cdots & \e_1 + \e_3 & \\
    \vdots & \vdots & \vdots & \ddots & \vdots & \\
    \textcolor{lightgray}{\o} & \textcolor{lightgray}{\o} & \textcolor{lightgray}{\o} & \cdots & \e_1 + \e_d & \\
    \end{array} \right) := \begin{pmatrix}
        [\e_{i_1}, \dots, \e_{i_N}] \\ \Phi_1 \\ \Phi_2 \\ \vdots \\ \Phi_N
    \end{pmatrix}
\end{equation}

For the next claims, recall that $\A$'s rows consist of all $d^N$ possible vectors of size $dN$ obtained by concatenating $N$ row vectors from $\{\e_k\}_{k \in [d]}$.
\begin{claim}\label{claim:samerowspace}
    $\Row_{\GF}(\Phi) = \Row_{\GF}(\A)$.
\end{claim}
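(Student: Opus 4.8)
The goal is to show $\Row_{\GF}(\Phi) = \Row_{\GF}(\A)$, where $\Phi$ is the matrix built from a single fixed row $[\e_{i_1},\dots,\e_{i_N}]$ of $\A$ together with the "difference" rows $\e_1+\e_j$ (for $j=2,\dots,d$) placed into each of the $N$ coordinate blocks. Since $\Phi$ has far fewer rows than $\A$, the natural route is a two-way inclusion: first show every row of $\Phi$ lies in $\Row_{\GF}(\A)$, then show every row of $\A$ lies in $\Row_{\GF}(\Phi)$.

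**Step 1 ($\Row_{\GF}(\Phi)\subseteq\Row_{\GF}(\A)$).** The first row of $\Phi$ is literally a row of $\A$, so nothing to check. For a typical $\Phi$-row of the form $[\o,\dots,\e_1+\e_j,\dots,\o]$ with the nonzero block in position $\ell$, I would exhibit it as a sum of two rows of $\A$: take the row $[\e_{i_1},\dots,\e_{i_{\ell-1}},\e_1,\e_{i_{\ell+1}},\dots,\e_{i_N}]$ and the row $[\e_{i_1},\dots,\e_{i_{\ell-1}},\e_j,\e_{i_{\ell+1}},\dots,\e_{i_N}]$. Over $\GF$ these add to $[\o,\dots,\e_1+\e_j,\dots,\o]$ since every block except the $\ell$-th cancels. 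Both summands are genuine rows of $\A$ because each is a concatenation of $N$ standard basis vectors, so the difference row lies in $\Row_{\GF}(\A)$. Hence all rows of $\Phi$ do.

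**Step 2 ($\Row_{\GF}(\A)\subseteq\Row_{\GF}(\Phi)$).** Fix an arbitrary row $[\e_{j_1},\dots,\e_{j_N}]$ of $\A$. I want to write it as a $\GF$-combination of $\Phi$'s rows. Start from the first row of $\Phi$, namely $[\e_{i_1},\dots,\e_{i_N}]$. In block $\ell$, I need to convert $\e_{i_\ell}$ into $\e_{j_\ell}$. Observe that over $\GF$, $\e_{i_\ell}+\e_{j_\ell} = (\e_1+\e_{i_\ell}) + (\e_1+\e_{j_\ell})$ when $i_\ell,j_\ell\neq 1$ (and it equals a single $\e_1+\e_{j_\ell}$ or $\e_1+\e_{i_\ell}$ term, or $\o$, in the degenerate cases $i_\ell=1$ or $j_\ell=1$ or $i_\ell=j_\ell$). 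In every case $\e_{i_\ell}+\e_{j_\ell}$ is a $\GF$-span combination of the rows $\{\e_1+\e_k : k=2,\dots,d\}$, which are exactly the nonzero entries of the $\ell$-th block of $\Phi$. Adding the corresponding $\Phi$-rows (which have $\o$ in all other blocks) to the first row of $\Phi$ flips block $\ell$ from $\e_{i_\ell}$ to $\e_{j_\ell}$ without touching the other blocks. Doing this for every $\ell=1,\dots,N$ yields $[\e_{j_1},\dots,\e_{j_N}]$ as a combination of $\Phi$'s rows, so it lies in $\Row_{\GF}(\Phi)$.

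**Main obstacle / care needed.** The only real subtlety is bookkeeping the degenerate cases $i_\ell = 1$, $j_\ell = 1$, or $i_\ell = j_\ell$ in Step 2: the identity $\e_1+\e_{i_\ell}+\e_1+\e_{j_\ell} = \e_{i_\ell}+\e_{j_\ell}$ is exactly right only when both indices differ from $1$, and one must separately note that $\e_{i_\ell}+\e_{j_\ell}$ is still in the span of $\{\e_1+\e_k\}_{k\ge 2}$ in the remaining cases (it is either $\o$, a single generator, or, if one index is $1$, literally a single generator). Once that case analysis is dispatched the two inclusions give $\Row_{\GF}(\Phi)=\Row_{\GF}(\A)$. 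I would also remark that $\Phi$'s block-diagonal-plus-one-row structure is what makes the subsequent rank computation transparent, which is presumably why the claim is isolated here.
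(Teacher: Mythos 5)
Your proof is correct and takes essentially the same two-inclusion approach as the paper. The only cosmetic difference is in Step 1: you express each difference row $[\o,\dots,\e_1+\e_j,\dots,\o]$ as the sum of the fixed row $[\e_{i_1},\dots,\e_{i_N}]$ with its $\ell$-th block replaced by $\e_1$ and by $\e_j$ respectively, while the paper uses the all-$\e_1$ row $[\e_1,\dots,\e_1]$ and its single-block modification; both pairs are rows of $\A$, so the inclusions go through identically.
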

\begin{proof}[Proof of Claim~\ref{claim:samerowspace}]

    By the above assertion, the first row of $\Phi$ is a row of $\A$.
    Consider any other row of $\Phi$.
    This row can be expressed as the sum $[\e_1, \e_1, \dots, \e_1] + [\e_1, \dots, \e_1, \e_{k}, \e_1 \dots, \e_1] \mod 2$ for some $k \in [d]$---both of which are also rows of $\A$.
    Hence, each individual row is contained in $\Row_{\GF}(\A)$, so $\Row_{\GF}(\Phi) \subseteq \Row_{\GF}(\A)$.
    On the other hand, for an arbitrary row $[\e_{j_1}, \dots, \e_{j_N}]$ of $\mathbf{A}$, we can write
    \begin{equation*}
        [\e_{j_1}, \dots, \e_{j_N}] = [\e_{i_1}, \dots, \e_{i_N}] + \sum\nolimits_{k=1}^{N} [\o, \dots, \o, (\e_1 + \e_{i_k}) + (\e_1 + \e_{j_k}), \o, \dots, \o]  \mod 2.
    \end{equation*}
    Each of the summands are evidently in the rowspace of $\Phi$. Hence, $\Row_{\GF}(\A) \subseteq \Row_{\GF}(\Phi)$.
\end{proof}

\begin{claim}\label{claim:phi-rowrank}
    $ \dim \Row_{\GF}(\Phi) = dN-(N-1)$
\end{claim}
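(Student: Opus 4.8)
The plan is to show that $\Phi$ has full row rank, i.e. its $(d-1)N+1$ rows are linearly independent over $\GF$, which immediately gives $\dim \Row_{\GF}(\Phi) = (d-1)N+1 = dN - (N-1)$. The structure of $\Phi$ makes this a nearly block-diagonal independence question, so I would set up a generic linear dependence and peel off the blocks one at a time.

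First I would suppose $c_0 [\e_{i_1}, \dots, \e_{i_N}] + \sum_{k=1}^{N} \sum_{j=2}^{d} c_{k,j} [\o, \dots, \o, \e_1 + \e_j, \o, \dots, \o] = \o \pmod 2$, where the $\e_1 + \e_j$ sits in the $k$-th block of coordinates. Restricting attention to the $k$-th block of $d$ coordinates, the dependence reads $c_0 \e_{i_k} + \sum_{j=2}^{d} c_{k,j}(\e_1 + \e_j) = \o \pmod 2$ for each $k \in [N]$; these $N$ equations are independent of each other since the blocks occupy disjoint coordinates. So it suffices to analyze one block: I would argue that $\{\e_1 + \e_2, \dots, \e_1 + \e_d\}$ spans the $(d-1)$-dimensional "even-weight" subspace of $\GF^d$ and is linearly independent (a standard fact — e.g. these are the rows of an incidence-type matrix of full rank $d-1$), while $\e_{i_k}$ has odd weight and hence lies outside their span. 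Therefore in each block $c_0 = 0$ is forced, and then $\sum_{j=2}^d c_{k,j}(\e_1 + \e_j) = \o$ forces all $c_{k,j} = 0$ by independence of the $\e_1 + \e_j$. Doing this for every $k$ kills all coefficients, establishing independence.

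The only mild subtlety — and the step I'd expect to need the most care — is the claim that $\e_{i_k} \notin \Span_{\GF}\{\e_1 + \e_j : j = 2, \dots, d\}$, because one must check it uniformly over which standard basis vector $\e_{i_k}$ happens to be (including $i_k = 1$). This is handled cleanly by the weight/parity argument: every vector in that span is a sum of even-weight vectors, hence has even Hamming weight, whereas $\e_{i_k}$ has weight $1$; so the containment is impossible regardless of $i_k$. One should also note the edge case $d = 1$, where each block contributes only the single coordinate and there are no $\e_1 + \e_j$ rows, so $\Phi$ is just $[\e_1, \dots, \e_1]$ of rank $1 = dN - (N-1)$; and the interpretation of $\Phi$ in~\eqref{eqn:phi-matrix} already records that the $\e_1+\e_j$ rows range over $j \in \{2,\dots,d\}$, matching the count $(d-1)N + 1$.

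Combining this with Claim~\ref{claim:samerowspace}, which gives $\Row_{\GF}(\Phi) = \Row_{\GF}(\A)$, we conclude $\dim \Row_{\GF}(\A) = dN - (N-1)$ as well; this is exactly the $\GF$ half of Lemma~\ref{lemma:rankA}, and the real-rank half then follows from Fact~\ref{fact:r-gf2-rank} together with the observation that $\A$ has at most $dN - (N-1)$ independent rows over $\R$ because the $N-1$ relations $\A(\e_1 - \e_2 \text{ in block } k \text{ minus block } k+1)$-type dependencies (equivalently, the sum of coordinates within each block is constant across blocks) cut the rank down to at most $dN-(N-1)$ over any field.
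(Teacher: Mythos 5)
Your proof of the claim is correct and follows essentially the same block-structure strategy as the paper: reduce a global linear dependence to $N$ independent one-block conditions, then show in each block that the vectors $\e_1+\e_2,\dots,\e_1+\e_d$ are independent and that $\e_{i_k}$ lies outside their span. The one place you diverge is the mechanism for that last step. The paper splits into cases on whether $\e_{i_k}=\e_1$ or $\e_{i_k}\neq\e_1$ and derives a parity mismatch in each; you instead observe once and for all that the span of $\{\e_1+\e_j\}$ lies in the even-Hamming-weight subspace of $\GF^d$ while $\e_{i_k}$ has weight $1$, which handles every $i_k$ uniformly. That is a slight simplification and arguably the cleaner bookkeeping. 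Your closing remark on the real case is not needed for this claim, but it is a legitimate shortcut: rather than redoing the argument with $\tilde{\Phi}$ as the paper does, you get $\rank_{\R}(\A)\geq dN-(N-1)$ for free from Fact~\ref{fact:r-gf2-rank}, and the matching upper bound from the $N-1$ explicit kernel vectors (constant $+1$ on block $k$, $-1$ on block $k+1$) that every row of $\A$ annihilates because each block of each row sums to $1$.
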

\begin{proof}[Proof of Claim~\ref{claim:phi-rowrank}]
    We first establish that each submatrix $\Phi_i$ for $i \in [N]$ has full row rank.
    Without loss of generality, consider the rows of $\Phi_1$.
    Assume by contradiction $\c := (c_1, \dots, c_{d-1}) \neq \mathbf{0}_{d-1} \in \GF^{d-1}$ describes a trivial linear combination of them, i.e.
    \begin{equation*}
        [  \e_1\sum\nolimits_{i=1}^{d-1} c_i + \sum\nolimits_{i=1}^{d-1}c_{i} \e_{i+1}, \o, \dots, \o] = \mathbf{0}_{dN} \mod 2.
    \end{equation*}
    Since no subset of $\e_2, \dots, \e_{d}$ sums to $\o$, $\sum_{i=1}^{d-1}c_{i} \e_{i+1}$ must be identically $\o$, but this necessitates that $\c = \mathbf{0}_{d-1}$, contradiction.
    An analogous argument applies for every submatrix.
    Thus, $\rank_{\GF}(\Phi_1) = \dots = \rank_{\GF}(\Phi_N) = d-1$.
    \\

    It is easy to see that any linear combination of rows in $\Phi_1, \dots, \Phi_N$ result in a nonzero vector.
    Hence, the stacked matrix $(\Phi_1; \Phi_2; \dots; \Phi_N)$ constitutes a linearly independent set of size $(d-1)N$.
    We now elucidate that the first vector is also linearly independent of this stacked matrix.
    To this end, assume by contradiction that there exists some linear combination of rows in $\Phi_1, \dots, \Phi_N$ that sum to $[\e_{i_1}, \dots, \e_{i_N}]$, with coefficients
    \begin{equation*}
        \c := [\c^1, \c^2, \dots, \c^N] = \left((c^1_1, \dots, c^1_{d-1}), \dots, ((c^N_1, \dots, c^N_{d-1}) \right)  \in \GF^{(d-1)N} 
    \end{equation*}
    Picking any $k \in [N]$, the $k^{th}$ \emph{column-block} satisfies
    \begin{equation*}
        \e_1\sum\nolimits_{i=1}^{d-1} c_i^k + \sum\nolimits_{i=1}^{d-1}c_i^k \e_{i+1} = \e_{i_k} \mod 2.
    \end{equation*}
    If $\e_{i_k} = \e_1$, then the first sum dictates $c_1^k, \dots, c_{d-1}^k$ must have odd parity and contributes a single bit overall.
    But then the second sum contributes an odd number of bits.
    This yields a mismatch between the parity on both sides.
    If $\e_{i_k} \neq \e_1$, then second sum must have all $c_i^k$'s as zero except for one of them, but then the first sum contributes a single bit, a contradiction for the same reason as before.
    Thus, the $(d-1)N + 1$ rows of $\Phi$ span a linear space of dimension $dN-(N-1)$.

\end{proof}

Claim~\ref{claim:phi-rowrank} and Claim~\ref{claim:samerowspace} together imply $\Row_{\GF}(\A) = dN-(N-1)$.
For easy reference, we state a trivial corollary following from the fact that $[\e_{i_1}, \dots, \e_{i_N}]$ was initially picked arbitrarily.
\begin{corollary} \label{corollary:helper}
    For any $[\e_{i_1}, \dots, \e_{i_N}] \in \{\A_1, \dots, \A_{d^N}\}$ chosen to construct $\Phi$, the rows of $\Phi$ in~\eqref{eqn:phi-matrix} consist of a basis of $\Row_{\GF}(\A)$.
\end{corollary}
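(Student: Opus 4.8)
The plan is simply to assemble the two claims just proved, together with Lemma~\ref{lemma:rankA}. First I would observe that Claim~\ref{claim:samerowspace} already gives $\Row_{\GF}(\Phi) = \Row_{\GF}(\A)$, so the $(d-1)N+1$ rows displayed in~\eqref{eqn:phi-matrix} — namely $[\e_{i_1}, \dots, \e_{i_N}]$ together with all rows of $\Phi_1, \dots, \Phi_N$ — constitute a spanning set of $\Row_{\GF}(\A)$ over $\GF$. Next I would invoke Lemma~\ref{lemma:rankA} (equivalently, the conclusion of Claim~\ref{claim:phi-rowrank}), which says $\dim \Row_{\GF}(\A) = dN - (N-1)$. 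Using the arithmetic identity $dN-(N-1) = (d-1)N+1$, this dimension is exactly the number of rows of $\Phi$, so a spanning set of this cardinality is automatically linearly independent, hence a basis of $\Row_{\GF}(\A)$. Alternatively, one can cite the linear-independence argument already carried out inside the proof of Claim~\ref{claim:phi-rowrank}, which shows directly that those $(d-1)N+1$ rows are independent over $\GF$; combined with Claim~\ref{claim:samerowspace} this again yields a basis. Either route closes the argument.

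Finally I would point out that the seed row $[\e_{i_1}, \dots, \e_{i_N}]$ used to build $\Phi$ in~\eqref{eqn:phi-matrix} was chosen as an arbitrary element of $\{\A_1, \dots, \A_{d^N}\}$, and that neither Claim~\ref{claim:samerowspace} nor Claim~\ref{claim:phi-rowrank} used any property of it beyond the fact that it is a row of $\A$ (this is exactly what is used to embed it, and $[\e_1,\dots,\e_1]$, into $\Row_{\GF}(\A)$ in the proof of Claim~\ref{claim:samerowspace}, and it plays only a passive role in the independence argument of Claim~\ref{claim:phi-rowrank}). Consequently the statement holds verbatim for every such choice, which is precisely the content of the corollary.

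Since the substantive work has already been done in the two claims and in Lemma~\ref{lemma:rankA}, there is no real obstacle here. The only points requiring a moment's care are (i) recording the identity $(d-1)N+1 = dN-(N-1)$, so that ``spanning set of the right size'' reads off as ``basis'', and (ii) making explicit that the construction of $\Phi$ is uniform in the choice of seed row, so that the corollary follows as a free byproduct rather than a fresh computation. I would keep the write-up to two or three sentences accordingly.
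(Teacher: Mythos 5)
Your proposal is correct and takes essentially the same route as the paper, which treats the corollary as an immediate consequence of Claims~\ref{claim:samerowspace} and~\ref{claim:phi-rowrank} together with the observation that the seed row $[\e_{i_1},\dots,\e_{i_N}]$ was arbitrary. Your added bookkeeping (counting $(d-1)N+1 = dN-(N-1)$ rows and invoking the ``spanning set of the right cardinality is a basis'' fact) is just an unpacking of what the paper leaves implicit.
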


To handle the case of $\mathbb{R}$, one can apply an almost identical proof to establish the analogous claims for a $\R$-valued matrix $\Tilde{\Phi}$, which instead contains rows of the form $[\o, \dots, \o, \e_1 - \e_{i_k}, \o, \dots, \o]$.
We relegate the full matrix description to the Appendix~\ref{appendix:phi-tilde}.
Following this, the proof of Lemma~\ref{lemma:rankA} is complete.

\section{Proof of Lemma~\ref{lemma:rowcollector}} \label{sec:rowcollector}

To prove Lemma~\ref{lemma:rowcollector}, we can view the rowspace of $\D_S \A$ as the the cumulative span of the random variable sequence $Y_1, Y_2, \dots, Y_m$ where $Y_t \stackrel{iid}{\sim} \text{Unif}(\{\A_1, \dots, \A_{d^N}\})$.
\\

Importantly, by Fact~\ref{fact:r-gf2-rank} and Lemma~\ref{lemma:rankA}, whenever we have $\Row_{\GF}(\D_S \A) = \Row_{\GF}(\A)$ we also have $\Row_{\mathbb{R}}(\D_S \A) = \Row_{\mathbb{R}}(\A)$.
Therefore, to prove our choice of $m$ suffices it is enough to show $\dim \Span\{Y_1 \dots Y_m\} = \rank_{\GF}(\A)$ with probability $\geq 2/3$.
Before proceeding to the proof, we assume the next claim holds, which we verify in the sequel.
\begin{claim}\label{claim:maximal-subset}
    Suppose $W$ is a subspace of $\Row_{\GF}(\A)$ and $W$ contains at least one element of $\{\A_1, \dots, \A_{d^N}\}$.
    If $\dim W < \rank_{\GF}(\A)$, then there are at least $d^{N-1}$ rows of $\A$ which are each linearly independent of $W$.
\end{claim}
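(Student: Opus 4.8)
The plan is to exploit the explicit basis of $\Row_{\GF}(\A)$ furnished by Corollary~\ref{corollary:helper}. By hypothesis $W$ contains some row $r_0 = [\e_{i_1},\dots,\e_{i_N}]$ of $\A$; build $\Phi$ from this row. Its rows form a basis of $\Row_{\GF}(\A)$ consisting of $r_0$ together with the vectors $[\o,\dots,\e_1+\e_j,\dots,\o]$ for $k\in[N]$, $j\in\{2,\dots,d\}$. Grouping the non-$r_0$ basis rows by block shows $\Row_{\GF}(\A)=\langle r_0\rangle\oplus V_1\oplus\dots\oplus V_N$, where $V_k\subseteq\GF^{dN}$ is the $(d-1)$-dimensional space of vectors supported on block $k$ with even coordinate sum; the sum is direct because each block of $r_0$ has odd coordinate sum, so $r_0\notin V_1\oplus\dots\oplus V_N$. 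Using the identity in the proof of Claim~\ref{claim:samerowspace}, every row of $\A$ can be written as $[\e_{j_1},\dots,\e_{j_N}] = r_0 + \sum_{k=1}^N g_k(j_k)$, where $g_k(j):=[\o,\dots,\e_{i_k}+\e_j,\dots,\o]\in V_k$ is zero exactly when $j=i_k$, and $\{g_k(j):j\neq i_k\}$ is a basis of $V_k$.

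Next I would project along $r_0$: let $\pi:\Row_{\GF}(\A)\to V_1\oplus\dots\oplus V_N$ be the linear projection killing $r_0$. Since $r_0\in W$, we have $W=\langle r_0\rangle\oplus\pi(W)$, so $\dim\pi(W)=\dim W-1<(d-1)N=\dim(V_1\oplus\dots\oplus V_N)$, and $\pi(W)$ is a proper subspace. Moreover a row of $\A$ lies in $W$ iff its image $\pi([\e_{j_1},\dots,\e_{j_N}])=(g_1(j_1),\dots,g_N(j_N))$ lies in $\pi(W)$. Choose a nonzero linear functional $\lambda=(\lambda_1,\dots,\lambda_N)$ on $V_1\oplus\dots\oplus V_N$ vanishing on $\pi(W)$; then any row with $\sum_{k=1}^N\lambda_k(g_k(j_k))=1$ is linearly independent of $W$, so it suffices to lower bound the number of such rows by $d^{N-1}$.

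For the count, set $b_k:=\#\{j\in[d]:\lambda_k(g_k(j))=0\}$ and $a_k:=d-b_k$. Since $g_k(i_k)=\o$, always $b_k\geq 1$; and fixing an index $k_0$ with $\lambda_{k_0}\neq 0$, the fact that $\{g_{k_0}(j):j\neq i_{k_0}\}$ spans $V_{k_0}$ forces $a_{k_0}\geq 1$, hence $1\leq b_{k_0}\leq d-1$ and $|b_{k_0}-a_{k_0}|\leq d-2$, while $|b_k-a_k|\leq d$ for every $k$. A standard parity generating-function computation gives that the number of tuples $(j_1,\dots,j_N)\in[d]^N$ with $\sum_k\lambda_k(g_k(j_k))=1$ equals $\tfrac12\big(d^N-\prod_{k=1}^N(b_k-a_k)\big)$, and then $\prod_k(b_k-a_k)\leq\prod_k|b_k-a_k|\leq(d-2)\,d^{N-1}$ yields at least $\tfrac12(d^N-(d-2)d^{N-1})=d^{N-1}$ such rows, as desired. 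The main obstacle is establishing the two structural facts cleanly — that $\Row_{\GF}(\A)$ splits as the claimed direct sum (so $\pi$ and $\lambda$ behave correctly) and that $\{g_{k_0}(j):j\neq i_{k_0}\}$ spans $V_{k_0}$ (so $a_{k_0}\geq 1$); once these are in hand the counting is routine, though one must track the constant $d-2$ versus $d$ carefully to land exactly at $d^{N-1}$.
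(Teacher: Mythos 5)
Your proof is correct, but it takes a genuinely different route from the paper's, which is shorter and more elementary. The paper's argument: since $W$ contains a row $r_0$ but has too small a dimension, at least one of the block-supported basis vectors $\phi_i^n = [\o,\dots,\e_1+\e_i,\dots,\o]$ from Corollary~\ref{corollary:helper} (built from $r_0$) must lie outside $W$; there are then exactly $d^{N-1}$ unordered pairs $\{\mathbf{a},\mathbf{b}\}$ of rows of $\A$ with $\mathbf{a}+\mathbf{b}=\phi_i^n$, and in each pair at least one member avoids $W$ (else $\phi_i^n\in W$). That yields $d^{N-1}$ rows directly, with no counting machinery. Your version is a dual argument: after establishing $\Row_{\GF}(\A)=\langle r_0\rangle\oplus V_1\oplus\cdots\oplus V_N$ and projecting away $r_0$, you pick any nonzero functional $\lambda$ annihilating $\pi(W)$ and count rows on which $\lambda\circ\pi=1$ via the standard $\pm1$ character-sum identity. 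Both arguments rest on the same structural input (the basis of Corollary~\ref{corollary:helper} anchored at $r_0\in W$), but they diverge in the counting: the paper's choice of a missing direction supported in a single block makes the pair count immediate by inspection, whereas your computation handles an arbitrary annihilating functional and recovers the bound by isolating the one coordinate where $\lambda$ is forced to be nontrivial ($|b_{k_0}-a_{k_0}|\le d-2$ because $a_{k_0},b_{k_0}\ge1$, versus the trivial $\le d$ elsewhere), which lands exactly at $d^{N-1}$. The paper's approach buys brevity; yours buys uniformity and would extend unchanged to hyperplanes that are not block-supported.
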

The main message of the above is that as long as $W$ is "missing a direction" in $\A$'s rowspace, there are at least $1/d$ fraction of rows that would increase its dimensionality.

\begin{claim}\label{claim:markovchain}
    Let $Y_1, Y_2, \dots$ where $Y_t \stackrel{iid}{\sim} \text{Unif}(\{\A_1, \dots, \A_{d^N}\})$.
    We have that $m \lesssim (dN)^2 \log d$ samples suffice to ensure $\dim \Span\{Y_1, \dots, Y_m\} = \rank_{\GF}(\A)$ with probability $\geq 2/3$.
\end{claim}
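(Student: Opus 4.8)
The plan is to model the process $t \mapsto W_t := \Span\{Y_1, \dots, Y_t\}$ as a Markov chain on the (finite) set of subspaces of $\Row_{\GF}(\A)$ that are spanned by subsets of $\{\A_1, \dots, \A_{d^N}\}$, together with the absorbing state $\Row_{\GF}(\A)$ itself. Since $\dim W_t$ is nondecreasing in $t$ and bounded above by $\rank_{\GF}(\A) = dN - (N-1) =: r$, any trajectory makes at most $r$ strict increases before absorption. The key quantitative input is Claim~\ref{claim:maximal-subset}: whenever $W_t$ is a proper subspace of $\Row_{\GF}(\A)$ containing at least one row of $\A$ (which holds for all $t \ge 1$), at least $d^{N-1}$ of the $d^N$ rows are linearly independent of $W_t$, so the chain leaves its current state at the next step with probability $\ge 1/d$; equivalently it self-loops with probability $\le 1 - 1/d$.

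First I would set up the failure event. We fail to reach the absorbing state after $m$ steps iff the trajectory $(W_1, \dots, W_m)$ has fewer than $r$ strict increases, i.e. at least $m - r$ of the $m$ steps are self-loops. I would bound $\Pr[\text{fail}]$ by a union bound over ``bad trajectories.'' A bad trajectory is determined by the sequence of distinct subspaces it visits (a chain $W^{(0)} \subsetneq W^{(1)} \subsetneq \cdots \subsetneq W^{(j)}$ with $j < r$) together with the positions of the strict-increase steps; but it is cleaner to first bound the probability that a fixed nested chain of subspaces of length $j+1 \le r$ is exactly the set of visited states, and then bound the number of such chains. Since each visited proper subspace is spanned by some subset of the $d^N$ rows and has dimension $\le r$, the number of distinct subspaces that can ever appear is at most $\binom{d^N}{r} \le d^{Nr}$, so the number of nested chains of length $\le r$ is at most $d^{Nr \cdot r}$ — actually it suffices to bound it by $(d^{Nr})^{r} = d^{Nr^2}$, and even the cruder $d^{N \cdot r \cdot r}$ is fine since we only need $\log(\#\text{bad trajectories}) = O((dN)^2)$. (I'd double-check the exact exponent; the overview in the excerpt claims $d^{N \rank_{\GF}(\A)}$, so I would aim to match that — one can specify a trajectory by listing, for each of its $\le r$ strict-increase steps, the row $\A_{i}$ that caused it, giving at most $(d^N)^r = d^{Nr}$ bad trajectories.)

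Next, for a fixed bad trajectory with $j < r$ strict increases, among the $m$ steps at least $m - j \ge m - r$ are self-loops relative to the subspace current at that time, and each such step independently has conditional probability $\le 1 - 1/d$ of being a self-loop given the past (by Claim~\ref{claim:maximal-subset}); so the probability of realizing that trajectory is at most $(1 - 1/d)^{m-r} \le e^{-(m-r)/d}$. Union-bounding, $\Pr[\text{fail}] \le d^{Nr} \cdot e^{-(m-r)/d} = \exp\!\big(Nr \ln d - (m-r)/d\big)$. Choosing $m$ so that $(m-r)/d \ge Nr \ln d + \ln 3$, i.e. $m \ge r + d(Nr\ln d + \ln 3)$, makes this $\le 1/3$. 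Since $r = dN - (N-1) \le dN$, this gives $m = O(d \cdot N \cdot dN \cdot \log d) = O((dN)^2 \log d)$, as claimed; the final step is just to invoke Fact~\ref{fact:r-gf2-rank} and Lemma~\ref{lemma:rankA} to transfer $\Row_{\GF}(\D_S\A) = \Row_{\GF}(\A)$ to the corresponding statement over $\mathbb{R}$, which was already noted in the text.

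The main obstacle I anticipate is making the ``bad trajectory'' counting argument fully rigorous with the right conditioning: the self-loop bound from Claim~\ref{claim:maximal-subset} is a bound on the one-step transition probability conditioned on the current state, so I must argue carefully (e.g. by exposing $Y_1, Y_2, \dots$ one at a time and using that each step is a self-loop with conditional probability $\le 1-1/d$ regardless of history, then multiplying) rather than treating the self-loop counts of a trajectory as independent events a priori. Everything else — the pigeonhole that a length-$m$ trajectory with $< r$ increases has $\ge m-r$ self-loops, the crude count of bad trajectories, and the choice of $m$ — is routine. A secondary point to verify is that Claim~\ref{claim:maximal-subset} indeed applies at every step $t \ge 1$, which holds because $W_t \ni Y_1 \in \{\A_1,\dots,\A_{d^N}\}$.
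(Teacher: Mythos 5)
Your proposal matches the paper's proof essentially line for line: a Markov chain on $(\dim W, W)$ pairs tracking the cumulative span, the self-loop bound $\leq 1-1/d$ from Claim~\ref{claim:maximal-subset}, the pigeonhole that a length-$m$ failing trajectory has $\geq m - \rank_{\GF}(\A) + 1$ self-loops, the count of at most $d^{N\rank_{\GF}(\A)}$ loop-less walks, the choice $m = \Theta((dN)^2\log d)$, and the transfer to $\mathbb{R}$ via Fact~\ref{fact:r-gf2-rank} and Lemma~\ref{lemma:rankA}. The conditioning subtlety you flag at the end is exactly what the paper handles via the Markov property and time-homogeneity in its step $(b)$, and the paper's step $(e)$ likewise counts loop-less walks rather than full length-$m$ paths, so the extra combinatorial factor you worry about is also elided there (and in both cases is dominated by the exponential term for the chosen $m$).
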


\begin{proof}[Proof of Claim~\ref{claim:markovchain}]
Let $m$ be a positive integer.
To each sequence $y = (\y_1, \dots, \y_m) \in \{\A_1, \dots, \A_{d^N}\}^{m}$ we associate another sequence $h^{y} = (h^y_1, \dots, h_m^y)$ where $h^y_t := \Span\{\y_1, \dots, \y_t\}$ is the cumulative span of the first $t$ vectors of the sequence $y$. 
Consider the directed graph $G = (V,E)$, in which
\begin{equation*}
    V := \left\{(\alpha, W) \mid  \alpha \in [\rank_{\GF}(\A)], \; W \text{ is a subspace of } \Row_{\GF}(\A) \right\}.
\end{equation*}
The edgeset $E$ is defined as follows.
For each $m \in \mathbb{N}_{>0}$ and each $\y \in \{\A_1, \dots, \A_{d^N}\}^{m}$, we include in $E$ the directed edge
\begin{equation*}
    \left((\dim h^{y}_t, h^{y}_t),  (\dim h^{y}_{t+1}, h^{y}_{t+1}) \right) \in V \times V
\end{equation*}
for all $t \in [m-1]$, emphasizing that self-loops are allowed.\footnote{Plainly stated, the construction places paths on the graph tracking the cumulative span and its dimension for every possible sequence.}
Notably, $V$ includes the vector $v^{*} := (\rank_{\GF}(\A), \Row_{\GF}(\A))$ because $\A$ trivially has a sequence of $\rank_{\GF}(\A)$ vectors whose span is $\Row_{\GF}(\A)$.
In particular, $v^{*}$ has only a single outgoing edge, which is also a self-loop.
\\

It follows that the joint distribution of $Y_1, Y_2, \dots$ defines a time-homogeneous Markov chain $X_1, X_2, \dots $ on the state space $V$.
For each $(v_1, \dots, v_m) \in \text{Paths}(G)$, denote $(\Tilde{v}_1, \dots, \Tilde{v}_{\Tilde{m}})$ as the truncated path obtained by removing self-loops, i.e. any vertex whose previous vertex is identical to itself. 
It follows that the probability $\dim \Span \{Y_1, \dots, Y_m \} \neq \rank_{\GF}(\A)$ is at most

\begin{align*}
    \mathbb{P}(X_m \neq v^{*}) &= \sum\nolimits_{(v_1, \dots, v_m) \in \text{Paths}(G) : v_m \neq v^{*}} \mathbb{P}(X_1 = v_1, \dots X_m = v_m) \\
    &\stackrel{(a)}{=} \sum\nolimits_{(v_1, \dots, v_m) \in \text{Paths}(G) : v_k \neq v^{*}, \forall k\in [m]} \mathbb{P}(X_1 = v_1, \dots X_m = v_m) \\
    &\stackrel{(b)}{=} \sum\nolimits_{(v_1, \dots, v_m) \in \text{Paths}(G): v_k \neq v^{*}, \forall k\in [m]} \mathbb{P}(X_1 = \Tilde{v}_1, \dots X_{\Tilde{m}} = \Tilde{v}_{\Tilde{m}}) \times (\text{self-loop probabilities}) \\
    &\stackrel{(c)}{\leq} \sum\nolimits_{(v_1, \dots, v_m) \in \text{Paths}(G): v_k \neq v^{*}, \forall k\in [m]}  \left(1-\frac{1}{d} \right)^{(\text{\# self-loops in $(v_1, \dots, v_m)$})} \\
    &\stackrel{(d)}{\leq}  \sum\nolimits_{(v_1, \dots, v_m) \in \text{Paths}(G): v_k \neq v^{*}, \forall k\in [m]} e^{-\frac{m-\rank_{\GF}(\A)+1}{d}}\\
    &\stackrel{(e)}{\leq}   e^{-\frac{m-\rank_{\GF}(\A)+1}{d}} d^{N\rank_{\GF}(\A)}\\
    &\stackrel{(f)}{\leq} 1/3
\end{align*}
where $(a)$ follows since $v^{*}$ is an absorbing state; $(b)$ follows by the Markov property and time-homogeneity; $(c)$ uses the observation that $\mathbb{P}(X_{t+1} = v \mid X_t = v) \leq 1-d^{N-1}/d^N$ for $v \in V \setminus \{v^{*}\}$ by Claim~\ref{claim:maximal-subset}; $(d)$ uses the observation that there are $\geq m - \rank_{\GF}(\A) + 1$ self-loops 
in paths never reaching $v^{*}$, otherwise $> (m-1)-(m-\rank_{\GF}(\A) + 1) = \rank_{\GF}(\A) - 2$ of the edges are associated with an increase of the cumulative span's dimension---implying the chain has to reach $v^{*}$; $(e)$ uses the observation that there are at most $ \leq d^{N \Tilde{m}}$ sequences in $\{\A_1, \dots, \A_{d^N}\}^{\Tilde{m}}$, each of which contributes to at most one length $\Tilde{m}$ loop-less walk in the graph.
Since the path mustn't terminate at $v^{*}$, $\Tilde{m} \leq \rank_{\GF}(\A)$, from which the bound follows.
Finally, $(f)$ follows from choosing $m = \rank_{\GF}(\A) - 1 + \lceil d\log (3d^{N\rank_{\GF}(\A)}) \rceil$, which is $O((dN^2)\log(d))$ since $\rank_{\GF}(\A) = O(dN)$ (c.f. Lemma~\ref{lemma:rankA}).

\end{proof}

Now, it just remains to verify Claim~\ref{claim:maximal-subset}.

\begin{proof}[Proof of Claim~\ref{claim:maximal-subset}]
    Consider the set of vectors
    \begin{equation*}
        \phi_{i}^{n} := [\o, \dots ,\o, \underbrace{\e_1 + \e_i}_{n^{th} \text{ position}}, \o, \dots \o] \in \{0,1\}^{dN}
    \end{equation*}
    which are defined for all $i \in [d]\setminus \{1\}$ and $n \in [N]$.
    There must exist a $\phi_{i}^n$ which is linearly independent of $W$.
    Otherwise $W$ contains the subspace $\Row_{\GF}(\Phi)$ ($\Phi$ is defined in~\eqref{eqn:phi-matrix})---but $\Row_{\GF}(\Phi) = \Row_{\GF}(\A)$ by Corollary~\ref{corollary:helper}, which contradicts the dimension of $W$.
    For this $\phi_{i}^n$, consider the set of unordered vector pairs $\left\{\{\mathbf{a}, \mathbf{b}\} \in \{\A_1, \dots, \A_{d^N}\}^2 \mid \mathbf{a}+\mathbf{b} = \phi_{i}^{n} \mod 2 \right\}$.
    
    Notably, this set contains exactly $d^{N-1}$  pairs, as one must fix the column-block in the $n^{th}$ position to be $\e_{1}$ for one, which fixes the other to be $\e_{i}$---varying over the last $N-1$ column-blocks with $d$ choices for each.
    For each pair, at least one of $\mathbf{a}$ or $\mathbf{b}$ must be linearly independent of $W$, for otherwise it contradicts that $\phi_{i}^{n}$ is not in the subspace $W$.
    Hence, one can find $\geq d^{N-1}$ rows of $\A$, each individually linearly independent of $W$.
\end{proof}

In the next section, we prove the correctness of Algorithm~\ref{pdcode:exactrank1tc}.

\section{Proof of Theorem~\ref{theorem:main-result}, Upper Bound}\label{section:main-result-ub}
Let $m$ denote the quantity in Claim~\ref{claim:markovchain} of the previous section.
The algorithm asserted in Theorem~\ref{section:main-result-ub} simply draws $m' := \max\{m, dN\} = O((dN)^2\log d)$ samples, lets $S$ be the set of associated indices, and runs Algorithm~\ref{pdcode:exactrank1tc} for this input $S$.

The runtime is immediate.
Indeed, the systems in steps (1) and (2) are size $m' \times dN$ and consistent, so Gauss-Jordan on each terminates in time $O(m'(dN)^2)$.
The algorithm takes an additional $O(N)$ time per queried entry in step (3), thus $O(qN + m'(dN)^2)$ overall.

We now turn towards establishing the correctness of Algorithm~\ref{pdcode:exactrank1tc}.
As our choice of $m'$ satisfies Lemma~\eqref{lemma:rowcollector}, we have that $\Row_{\GF}(\D_S\A) = \Row_{\GF}(\A)$ and $ \Row_{\mathbb{R}}(\D_S \A) = \Row_{\mathbb{R}}(\A)$
holds with probability $\geq 2/3$.
Thus, it suffices to prove the following.

\begin{claim}\label{claim:correctness}
    Let $\cU \in \NZT$ be a rank-1 tensor.
    Assume the input $S$ satisfies both
    \begin{equation*}
        \Row_{\GF}(\D_S\A) = \Row_{\GF}(\A), \quad \Row_{\mathbb{R}}(\D_S \A) = \Row_{\mathbb{R}}(\A).
    \end{equation*}
    Then, the output of Algorithm~\ref{pdcode:exactrank1tc} satisfies $\hat{\cU} = \cU$.
\end{claim}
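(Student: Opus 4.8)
The plan is to show that the hypotheses on $S$ force the two ``free'' choices $\y_1,\y_2$ made in steps (1)--(2) of Algorithm~\ref{pdcode:exactrank1tc} to satisfy $\A\y_1 = f(\cU) \bmod 2$ and $\A\y_2 = \Tilde f(\cU)$ over $\R$, i.e.\ they reproduce the \emph{full} right-hand sides and not merely their $\D_S$-restrictions. Granting this, the recovery identity $\cU = \vect_\pi^{-1}(\varphi^{-1}(\A\y_1) \odot \exp(\A\y_2))$ observed just after Observation~\ref{observation:decomposition_lemma} is precisely what step (3) computes coordinatewise, so $\hat\cU = \cU$ follows.

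The first step is to pin down a \emph{genuine} solution to anchor the argument. Writing $\cU = \u_1 \otimes \cdots \otimes \u_N$ and letting $\x$ be the stacked vector of~\eqref{eqn:decision-variable}, the derivation preceding Observation~\ref{observation:decomposition_lemma} shows that $\x_1^\star := (\varphi \circ \sgn)(\x)$ solves $\A\y = f(\cU) \bmod 2$ and $\x_2^\star := (\log \circ \, \text{abs})(\x)$ solves $\A\y = \Tilde f(\cU)$ over $\R$ (the latter being well defined since $\cU \in \NZT$ forces each $\u_\ell$ to have nonzero entries). Left-multiplying each identity by the row-selection matrix $\D_S$ shows $\x_1^\star,\x_2^\star$ also solve the restricted systems $\D_S\A\y = \D_S f(\cU)$ and $\D_S\A\y = \D_S \Tilde f(\cU)$; in particular these are consistent, so the vectors $\y_1,\y_2$ returned in steps (1)--(2) are well defined.

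The heart of the matter is the elementary, field-independent fact that $\Row_{\mathbb{K}}(\D_S\A) = \Row_{\mathbb{K}}(\A)$ implies $\ker_{\mathbb{K}}(\D_S\A) = \ker_{\mathbb{K}}(\A)$: if $\D_S\A\z = \mathbf 0$ then $\z$ is annihilated by every row of $\D_S\A$, hence by every vector of $\Row_{\mathbb{K}}(\D_S\A) = \Row_{\mathbb{K}}(\A)$, hence by every row of $\A$, so $\A\z = \mathbf 0$ (the reverse inclusion is trivial since the rows of $\D_S\A$ are a subset of those of $\A$). Applying this over $\GF$ to $\y_1 - \x_1^\star \in \ker_{\GF}(\D_S\A)$ gives $\A\y_1 = \A\x_1^\star = f(\cU) \bmod 2$, and applying it over $\R$ to $\y_2 - \x_2^\star$ gives $\A\y_2 = \Tilde f(\cU)$, exactly as claimed above. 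Note this is where the row-space equalities (rather than mere rank equality, which does not control the kernels) are used.

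To finish, I would just read off the output: for each $(i_1,\dots,i_N) \in [d]^N$, step (3) returns $\varphi^{-1}(\inner{\A_{\pi(i_1,\dots,i_N)}}{\y_1})\exp(\inner{\A_{\pi(i_1,\dots,i_N)}}{\y_2})$, which is the $\pi(i_1,\dots,i_N)$-th coordinate of $\varphi^{-1}(\A\y_1) \odot \exp(\A\y_2) = \varphi^{-1}(f(\cU)) \odot \exp(\Tilde f(\cU))$. By the definitions $f = \varphi \circ \sgn \circ \vect_\pi$ and $\Tilde f = \log \circ \, \text{abs} \circ \vect_\pi$, this coordinate equals $\sgn(\cU_{(i_1,\dots,i_N)})\exp(\log|\cU_{(i_1,\dots,i_N)}|) = \sgn(\cU_{(i_1,\dots,i_N)})\,|\cU_{(i_1,\dots,i_N)}| = \cU_{(i_1,\dots,i_N)}$, the last step using that $\cU$ has no zero entries. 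Hence $\hat\cU = \cU$. I do not anticipate a genuine obstacle: the proof is essentially bookkeeping, and the one point needing care is the passage from ``solves the $\D_S$-restricted system'' to ``solves the full system,'' which is precisely the rowspace-to-kernel duality above together with the existence of the anchoring solution guaranteed by Observation~\ref{observation:decomposition_lemma}.
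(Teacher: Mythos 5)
Your proof is correct and follows essentially the same plan as the paper's: obtain a genuine solution to the full pair of systems, argue that any solution to the $\D_S$-restricted system must agree on the full system because the row-space hypothesis forces the kernels to coincide, and then read off the output coordinatewise. The paper packages the key step as a helper lemma (Lemma~\ref{lemma:partial_sys_consistency}) stated via $\ker_{\mathbb{K}}(\A_1) = \Row_{\mathbb{K}}(\A_1)^\perp = \Row_{\mathbb{K}}(\A)^\perp = \ker_{\mathbb{K}}(\A)$ together with the permutation matrix $\P_S$, whereas you prove the kernel equality directly from the rowspace equality without the orthogonal-complement formalism; the two arguments are equivalent, and your version is marginally more elementary.
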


To start, we establish a useful helper lemma.
\begin{lemma}\label{lemma:partial_sys_consistency}
    Suppose $\A \x = \b$ is a consistent linear system over a field $\mathbb{K}$, for which
    \[
    \A = \begin{pmatrix}
    \A_1 \\ \A_2
    \end{pmatrix}, \quad \mathbf{b} = \begin{pmatrix}
    \mathbf{b}_1 \\ \mathbf{b}_2
    \end{pmatrix}.
    \]
    
    Suppose $\Row_{\mathbb{K}}(\A_1) = \Row_{\mathbb{K}} (\A)$.
    If $\x^{*}$ satisfies $\A_1 \x^{*} = \mathbf{b}_1$, then $\A_2 \x^{*} = \mathbf{b}_2$ holds (and evidently $\A \x^{*} = \b)$.
\end{lemma}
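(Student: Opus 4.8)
The plan is to exploit consistency of the full system to produce a reference solution, and then argue that the row-space hypothesis forces any solution of the top block to agree with it on the bottom block.

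First I would fix, using consistency of $\A\x=\b$ over $\mathbb{K}$, some $\x_0$ with $\A\x_0=\b$; in particular $\A_1\x_0=\b_1$ and $\A_2\x_0=\b_2$. Next I would record the consequence of the hypothesis $\Row_{\mathbb{K}}(\A_1)=\Row_{\mathbb{K}}(\A)$: since every row of $\A_2$ lies in $\Row_{\mathbb{K}}(\A)=\Row_{\mathbb{K}}(\A_1)$, each row of $\A_2$ is a $\mathbb{K}$-linear combination of rows of $\A_1$, so there is a matrix $\mathbf{C}$ over $\mathbb{K}$ with $\A_2=\mathbf{C}\A_1$.

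Then I would simply compute. Given $\x^{*}$ with $\A_1\x^{*}=\b_1$, the difference $\x^{*}-\x_0$ satisfies $\A_1(\x^{*}-\x_0)=\b_1-\b_1=\mathbf{0}$, hence
\[
\A_2(\x^{*}-\x_0)=\mathbf{C}\A_1(\x^{*}-\x_0)=\mathbf{C}\,\mathbf{0}=\mathbf{0},
\]
so $\A_2\x^{*}=\A_2\x_0=\b_2$. Stacking $\A_1\x^{*}=\b_1$ and $\A_2\x^{*}=\b_2$ gives $\A\x^{*}=\b$, as claimed.

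I do not anticipate a genuine obstacle here; the only point requiring a moment's care is the passage from the row-space equality to the factorization $\A_2=\mathbf{C}\A_1$, which is the standard fact that "row $i$ of $\A_2$ is a linear combination of the rows of $\A_1$" can be assembled, coefficient-vector by coefficient-vector, into a single left-multiplying matrix $\mathbf{C}$. Everything else is a one-line linear-algebra computation valid over an arbitrary field, which is exactly the generality needed since we will apply it once with $\mathbb{K}=\GF$ and once with $\mathbb{K}=\mathbb{R}$.
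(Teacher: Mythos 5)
Your proof is correct and reaches the same conclusion through essentially the same scaffolding — fix a reference solution $\x_0$ (the paper calls it $\z$) from consistency, observe that $\x^{*}-\x_0 \in \ker_{\mathbb{K}}(\A_1)$, then argue this forces $\A_2(\x^{*}-\x_0)=\mathbf{0}$ — but you close the loop with a different technical step. The paper passes through orthogonal complements: $\ker_{\mathbb{K}}(\A_1)=\Row_{\mathbb{K}}(\A_1)^{\perp}=\Row_{\mathbb{K}}(\A)^{\perp}=\ker_{\mathbb{K}}(\A)$, then concludes $\A(\x^{*}-\z)=\mathbf{0}$. You instead write $\A_2=\mathbf{C}\A_1$ for a coefficient matrix $\mathbf{C}$ coming from $\Row_{\mathbb{K}}(\A_2)\subseteq\Row_{\mathbb{K}}(\A_1)$, and propagate the kernel membership by a one-line multiplication. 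Both are valid over an arbitrary field; the paper's identity $\ker = \Row^{\perp}$ is definitionally true (it does not require nondegeneracy of the bilinear form, only the definition of matrix-vector multiplication), so there is no real subtlety lurking over $\GF$, but your factorization route sidesteps any mention of the dot product and proves only the inclusion $\ker_{\mathbb{K}}(\A_1)\subseteq\ker_{\mathbb{K}}(\A_2)$, which is all that is needed. This makes your argument arguably a touch more transparent for readers who are wary of "orthogonality" over finite fields; the paper's version is slightly more compact. Either is acceptable.
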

\begin{proof}[Proof of Lemma~\ref{lemma:partial_sys_consistency}]
    By consistency there exists a $\mathbf{z}$ such that $\mathbf{A}_1 \mathbf{z} = \mathbf{b}_1$ and $\mathbf{A}_2 \mathbf{z} = \mathbf{b}_2$.
    Since $\x^{*}$ satisfies $\mathbf{A}_1 \x^{*} = \mathbf{b}_1$, we have $\x^{*}-\mathbf{z} \in \ker_{\mathbb{K}} (\mathbf{A}_1) = \Row_{\mathbb{K}}(\mathbf{A}_1)^{\perp} = \Row_{\mathbb{K}}(\mathbf{A})^{\perp} = \ker_{\mathbb{K}}(\mathbf{A})$.
    Hence, $\mathbf{A}(\x^{*} -\mathbf{z}) = \mathbf{0}$, i.e. $\A \x^{*} = \A \mathbf{z}$
    implying $\mathbf{A}_2 \x^{*} = \mathbf{A}_2 \mathbf{z} = \mathbf{b}_2$.
\end{proof}

Now, we have all the tools to prove Claim~\ref{claim:correctness}.

\begin{proof}[Proof of Claim~\ref{claim:correctness}]
    From~\eqref{eqn:permutation}, let $\mathbf{P}_{S} \in \{0,1\}^{d^N \times d^N}$ be a permutation matrix such that

    \begin{equation} \label{eqn:permuted-systems}
        \mathbf{P}_{S} \A = \begin{pmatrix}
        \D_{S} \A \\ \D_{\bar{S}}\A 
    \end{pmatrix}, \quad
    \mathbf{P}_{S} f(\cU) = \begin{pmatrix}
        \D_{S} f(\cU) \\ \D_{\bar{S}} f(\cU)
    \end{pmatrix}, \quad
    \mathbf{P}_{S} \Tilde{f}(\cU) = \begin{pmatrix}
        \D_{S} \Tilde{f}(\cU) \\ \D_{\bar{S}} \Tilde{f}(\cU)
    \end{pmatrix}.
    \end{equation}

    By Observation~\ref{observation:decomposition_lemma} both systems $\system{\D_S \A}{\D_S f(\cU)}{\GF}$ and $\system{\D_S \A}{\D_S \Tilde{f}(\cU)}{\mathbb{R}}$ are consistent---hence steps (1) and (2) return a $\y_1$ and $\y_2$ for which $\D_{S}\A \y_1 = \D_{S} f(\cU)$ and $\D_{S}\A \y_2 = \D_{S} \Tilde{f}(\cU)$.
    
    We invoke Lemma~\eqref{lemma:partial_sys_consistency} to the two systems induced by~\eqref{eqn:permuted-systems}.
    Specifically, we assign $\A \leftarrow \P_S \A$, $\b \leftarrow \P_S f(\cU)$ and use that $\D_{S}\A \y_1 = \D_{S} f(\cU)$, upon which the lemma lets us conclude $\P_S \A \y_1 = \P_S  f(\cU)$.
    Had we instead taken $\b \leftarrow \P_S \Tilde{f}(\cU)$ and used $\D_{S}\A \y_2 = \D_{S} \Tilde{f}(\cU)$, we would conclude $\P_S \A \y_2 = \P_S  \Tilde{f}(\cU)$.
    Since $\P_S$ is a row permutation, this implies
    \begin{equation*}
        \A \y_1 = f(\cU), \quad \A \y_2 =  \Tilde{f}(\cU).
    \end{equation*}
    Hence, recalling the map $\varphi$ and the definition of $f$ and $\Tilde{f}$,
    \begin{align*}
        \varphi^{-1}(\A \y_1) &= (\sgn \circ \vect_{\pi})(\cU) \\
        \exp(\A \y_2) &= (\text{abs} \circ \vect_{\pi})(\cU).
    \end{align*}
    Evidently,
    \begin{equation*}
        \cU_{(i_1, \dots, i_N)} = \varphi^{-1}(\inner{\A_{\pi(i_1,\dots,i_N)}}{\y_1}) \exp\left( \inner{\A_{\pi(i_1,\dots,i_N)}}{\y_2}\right) = \hat{\cU}_{(i_1, \dots, i_N)},
    \end{equation*}
    which is what we wanted to show.
\end{proof}

\section{Proof of Theorem~\ref{theorem:main-result}, Lower Bound}

As we'll describe, the lower bound in Theorem~\ref{theorem:main-result} is a simple consequence of the following.

\begin{lemma}\label{lemma:parallel_ccp}
    Consider a variant of the coupon collector problem in which there are $N \in \mathbb{N}_{>0}$ urns, each containing $d \in \mathbb{N}_{>0}$ unique balls.
    Suppose each draw is given by a uniformly random choice of one ball from each and every urn, for a total of $N$ balls per draw.
    
    \smallskip
    There exists absolute constants $n_0 \in \mathbb{N}$ and $C>0$ such that $dN \geq n_0$ implies that if less than $C d \log dN$ draws are taken, then at least one ball is missed with probability $\geq 2/3$.
    
\end{lemma}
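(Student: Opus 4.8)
The plan is to prove this as a lower bound for the coupon-collector variant, then observe that the theorem's lower bound follows by combining it with a counting argument showing that if a ball is missed, an adversary can perturb the tensor without affecting the observations.

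\paragraph{Lower bound for the parallel coupon collector.}
Fix $m$ draws, where $m < C d \log dN$ for a constant $C$ to be chosen. The plan is a first-moment / union-bound argument over individual balls. Fix one urn $j \in [N]$ and one ball $b$ in urn $j$. In each draw, the probability that ball $b$ is \emph{not} selected from urn $j$ is exactly $1 - 1/d$, independently across draws (the draws from distinct urns are independent, but we only need the within-urn-$j$ marginals). Hence the probability that ball $b$ is missed in all $m$ draws is $(1-1/d)^m \geq e^{-2m/d}$ for $d \geq 2$, using $1-x \geq e^{-2x}$ on $[0,1/2]$. There are $dN$ balls in total, so by inclusion–exclusion (Bonferroni), the probability that \emph{some} ball is missed is at least
\[
\sum_{\text{balls } b} (1-1/d)^m - \sum_{b \neq b'} \mathbb{P}(\text{both } b, b' \text{ missed}).
\]
The first sum is $\geq dN \cdot e^{-2m/d}$. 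For the second sum, I split on whether $b,b'$ lie in the same urn or different urns: if different urns, the events are independent so the joint probability is $(1-1/d)^{2m}$; if the same urn, the joint probability is $(1-2/d)^m \leq (1-1/d)^{2m}$ (one checks $(1-2/d) \le (1-1/d)^2$). Either way each pair contributes at most $(1-1/d)^{2m}$, so the second sum is at most $\binom{dN}{2}(1-1/d)^{2m} \le \tfrac12 (dN)^2 (1-1/d)^{2m}$.

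\paragraph{Choosing the constant.}
Write $p := (1-1/d)^m$. The bound above gives $\mathbb{P}(\text{some ball missed}) \geq dN\, p - \tfrac12 (dN)^2 p^2 = dN\,p\,(1 - \tfrac12 dN\, p)$. It therefore suffices to choose $m$ so that $dN\,p$ lies in a range where this is $\geq 2/3$; concretely, if $dN\, p \geq 4/3$ and $dN\, p \le $ (a suitable value making the bracket large enough), we are done, but more robustly one notes that for $dN \geq n_0$ large and $m \le C d\log(dN)$ with $C$ small, $p = (1-1/d)^m \geq e^{-2m/d} \geq (dN)^{-2C}$, so $dN\, p \geq (dN)^{1-2C} \to \infty$. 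This shows $dN\, p$ can be made arbitrarily large, which is \emph{too} large for the crude Bonferroni bracket $1 - \tfrac12 dN p$ to be positive. So the cleaner route — and the step I expect to need the most care — is to \emph{not} use a single pair-bound but instead apply the standard second-moment (Paley–Zygmund / Chebyshev) argument to $Z := \#\{\text{missed balls}\}$: compute $\mathbb{E}[Z] = dN\, p$ and $\mathrm{Var}(Z) \le \mathbb{E}[Z] + (\text{covariance terms})$, and show the covariance terms are lower-order when $dN\, p$ is large, so that $Z > 0$ with probability $\to 1$. In fact when $dN\,p$ is large the easiest finish is: the event $\{Z = 0\}$ has probability at most $\mathrm{Var}(Z)/\mathbb{E}[Z]^2$, and since $\mathrm{Var}(Z) \le \mathbb{E}[Z] + \sum_{b\ne b'}(\mathrm{Cov})$ with each covariance $\le 0$ for balls in the same urn (negative association) and $= 0$ across urns, we get $\mathrm{Var}(Z) \le \mathbb{E}[Z]$, hence $\mathbb{P}(Z = 0) \le 1/\mathbb{E}[Z] = 1/(dN\,p) \le (dN)^{2C - 1} \to 0$. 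Choosing $C$ small (say $C = 1/4$) and $n_0$ large enough that $(dN)^{2C-1} \le 1/3$ gives $\mathbb{P}(\text{some ball missed}) \ge 2/3$, as claimed.

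\paragraph{The main obstacle.}
The substantive point is that within a fixed urn the "missed-ball" indicator variables are \emph{negatively} correlated (if one ball is repeatedly skipped, others are more likely to be hit), so the variance of $Z$ is controlled by $\mathbb{E}[Z]$ alone — this is what makes the second-moment method go through cleanly without fighting the Bonferroni sign issue. I would state this negative-correlation fact explicitly (it is elementary: $\mathbb{P}(b, b' \text{ both missed in one draw}) = 1 - 2/d \le (1-1/d)^2$, and independence across draws upgrades it to $m$ draws), and then the Chebyshev bound on $\{Z=0\}$ finishes everything. Everything else is routine estimation with $1-x \le e^{-x}$ and $1-x \ge e^{-2x}$ on $[0,1/2]$.
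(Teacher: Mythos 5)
Your argument is correct and takes a genuinely different route from the paper. The paper's proof tracks the total number of uncollected balls $Z_t$ as a process in time and proves, via a \emph{recursive application of Hoeffding's lemma} to the moment generating function $\mathbb{E}(e^{-sZ_{t+1}})$, a Chernoff-style tail bound $\mathbb{P}(Z_{t+1} \leq \epsilon)$; this yields the conclusion after a choice of $s$ and $\epsilon$. Your argument instead fixes the number of draws $m$, sets $Z = \sum_b X_b$ with $X_b$ the indicator that ball $b$ is never drawn, and applies the standard second-moment method: $\mathbb{E}[Z] = dN(1-1/d)^m$, negative correlation of the $X_b$ within an urn (since $1 - 2/d \leq (1-1/d)^2$) combined with independence across urns gives $\mathrm{Var}(Z) \leq \mathbb{E}[Z]$, whence $\mathbb{P}(Z = 0) \leq 1/\mathbb{E}[Z]$, which is $\leq 1/3$ once $C = 1/4$ and $dN \geq 9$. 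This is a cleaner and more elementary route than the recursive MGF computation, at the cost of not producing the sharper tail information $\mathbb{P}(Z_{t+1} \leq \epsilon)$ for $\epsilon > 1$ that the paper's argument gives along the way (information the lemma does not actually need). Two small notes: (i) the initial Bonferroni / inclusion--exclusion paragraph is a dead end, as you yourself observe, and can be deleted entirely in favor of the Chebyshev finish; (ii) both your argument (via $1 - 1/d \geq e^{-2/d}$) and the paper's (in its Claim B.1) implicitly require $d \geq 2$, which is harmless in context but worth stating since the lemma as phrased only constrains $dN$.
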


\begin{remark}
    By the classic variant, it is easy to see $\Omega(d\log d)$ is necessary.
    Since our main result strives for optimal $d$ dependence, this would be enough.
    For this reason we leave the proof to Appendix~\ref{appendix:parallel_ccp}, which involves recursively applying Hoeffding's lemma to a particular martingale sequence.
    However, we feel this result clarifies the "coupon collector effect" frequently referred to in the tensor completion literature---often as a remark used to justify the presence of logarithmic factors in the upper bounds.
    In contrast, our lower bound explicitly uses such an argument.
    
\end{remark}

We now provide details of the proof.
\\
\begin{proof}[Proof of Lower Bound, Theorem~\ref{theorem:main-result}]
Let $\mathbf{u}_1, \dots, \mathbf{u}_N \stackrel{iid}{\sim} \text{Unif}(\{+1,-1\}^{d})$ and consider the tensor $\cU := \rho \, (\mathbf{u}_1 \otimes \dots 
\otimes \mathbf{u}_N)$ which serves as input to the algorithm.
 Recalling each entry of $\cU$ is dependent on $N$ out of $dN$ variables (c.f.\eqref{eqn:polynomial}), we say a sampled entry $\cU_{(i_1, \dots , i_N)}$ \emph{collects} the variable $(\mathbf{u}_{k})_{\ell}$ if the former is dependent on the latter.
 We can correspond the samples with the coupon collecting procedure in Lemma~\ref{lemma:parallel_ccp}.
 Concretely, we assign each $\mathbf{u}_i$ to the urn $i$, and each coordinate variable $(\mathbf{u}_i)_j$ with the $j^{th}$ ball in the $i^{th}$ urn.

 Let $m$ denote the quantity indicated by (the proof of) Lemma~\ref{lemma:parallel_ccp}.
 Suppose $\hat{\cU}$ is the output of some algorithm drawing less than $m$ samples.
 By Lemma~\ref{lemma:parallel_ccp} above, the algorithm will not \emph{collect} some variable $(\mathbf{u}_{k'})_{\ell'}$ with probability $\geq 2/3$.
 Let $\mathrm{i}'$ denote any index such that $\cU_{\mathrm{i}'}$ depends on the variable $(\mathbf{u}_{k'})_{\ell'}$.
 Even in the case the estimator knows the value of all the variables except $(\mathbf{u}_{k'})_{\ell'}$, the unobserved entry $\cU_{\mathrm{i}}$ is $\pm \rho$ with probability $1/2$ each.
 Thus, by pigeonholing 
 we have that $|\hat{\cU}_{\mathrm{i}'}-\cU_{\mathrm{i}'}| \geq \frac{1}{2} \, 2\rho$ on at least one of the events $(\mathbf{u}_{k'})_{\ell'}$ is $+1$ or $-1$.
 Since there are $d^{N-1}$ such indices, $\| \hat{\cU}-\cU\|_{F}^2 \geq \rho d^{N-1}$ with probability $1/2$, i.e. $\|\hat{\cU}-\cU\|_F \geq \rho \sqrt{d^{N-1}}$ with probability $ \geq \frac{2}{3} \cdot \frac{1}{2} = \frac{1}{3}$, as was to be shown.
\end{proof}

\section{Open Questions}

As asserted, we believe the upper bound can be improved to $O(d \log d)$ to match the lower bound in the $N = \Theta(1)$ case, either by a refinement of our "matrix sketch over $\GF$" viewpoint, or a different technique altogether.
A more difficult question is whether $\Theta(dN \log dN)$ is the 
correct sampling threshold for arbitrary $N$.
Indeed, given that $\A$ is nonnegative, then one only needs to sketch the rows of $\A$ as a real matrix.
Hence, one can apply  well-known results from leverage score sampling (e.g.~\cite{cohen2015uniform}) to yield a $O(dN \log dN)$ upper bound.

\bibliographystyle{alpha}
\bibliography{references}

\newpage
\appendix

\section{Appendix A : Proof of Lemma~\ref{lemma:parallel_ccp}} \label{appendix:parallel_ccp}
In this section, we prove Lemma~\ref{lemma:parallel_ccp}.
     \begin{itemize}
    \item Let $Z_t^{i}$ for $i \in [N]$ denote the random variable counting the remaining balls in urn $i$ yet to be seen by and including the $t^{th}$ draw.
    \item Let $I_t^{i}$ for $i \in [N]$ denote the indicator random variable which is '1' if the $t^{th}$ draw sees a previously unseen ball in urn $i$, and is '0' otherwise.
    \item Let $\mathcal{F}_{t}$ denote the natural filtration generated by the random variables $\{Z_{1}^{i}, Z_{2}^{i}, \dots, Z_{t}^{i} \}_{i=1}^{N}$.
\end{itemize}
Finally, denote $Z_t := \sum_{i=1}^{N} Z_t^{i}$ and $I_t := \sum_{i=1}I_t^{i}$.
In what follows, it is helpful to note that $Z_{t+1} = Z_{t} - I_{t+1}$.
\\
\begin{claim}\label{claim:martinagle}
    For any $s > 0$, $d \geq 2$, and $t \in \mathbb{N}_{> 0}$, we have
    \begin{equation}
     \mathbb{E}(e^{-sZ_{t+1}}) \leq e^{\frac{s^2 dN}{8} -s(1-\frac{1}{d})^{t}\frac{dN}{2}}
\end{equation}
\end{claim}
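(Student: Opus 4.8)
The plan is to establish the moment generating function bound in Claim~\ref{claim:martinagle} by induction on $t$, exploiting the recursion $Z_{t+1} = Z_t - I_{t+1}$ together with a conditional MGF estimate for the increment $I_{t+1}$ given $\mathcal{F}_t$. First I would analyze the conditional distribution of $I_{t+1}$: given $\mathcal{F}_t$, the $i$-th component $I_{t+1}^i$ is a Bernoulli random variable that equals $1$ with probability $Z_t^i / d$ (the chance the fresh draw from urn $i$ hits one of the $Z_t^i$ as-yet-unseen balls), and the $N$ components are conditionally independent across urns since the draws from distinct urns are independent. Hence $\mathbb{E}(e^{s I_{t+1}} \mid \mathcal{F}_t) = \prod_{i=1}^N \bigl(1 - \tfrac{Z_t^i}{d} + \tfrac{Z_t^i}{d} e^{s}\bigr)$.

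The key step is to control this product. I would apply Hoeffding's lemma to each factor: a Bernoulli$(p_i)$ variable with $p_i = Z_t^i/d \in [0,1]$ satisfies $\mathbb{E}(e^{s(I_{t+1}^i - p_i)}) \le e^{s^2/8}$, so $\mathbb{E}(e^{s I_{t+1}^i} \mid \mathcal{F}_t) \le e^{s^2/8 + s Z_t^i / d}$. Multiplying over $i \in [N]$ gives $\mathbb{E}(e^{s I_{t+1}} \mid \mathcal{F}_t) \le e^{s^2 N / 8 + s Z_t / d}$. Now write, using $Z_{t+1} = Z_t - I_{t+1}$ and taking $s \mapsto -s$ appropriately,
\begin{equation*}
\mathbb{E}(e^{-s Z_{t+1}}) = \mathbb{E}\bigl(e^{-s Z_t}\, \mathbb{E}(e^{s I_{t+1}} \mid \mathcal{F}_t)\bigr) \le e^{s^2 N/8}\, \mathbb{E}\bigl(e^{-s Z_t (1 - 1/d)}\bigr).
\end{equation*}
This is a clean recursion: the scale of the exponent in $Z_t$ contracts by the factor $(1 - 1/d)$ at each step while accumulating an additive $s^2 N/8$ term.

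To finish, I would unroll the recursion from $t$ down to the base case $t = 0$, where $Z_0 = dN$ deterministically so $\mathbb{E}(e^{-s Z_0}) = e^{-s d N}$. Iterating $t$ times produces a geometric sum $\sum_{j=0}^{t-1} (1-1/d)^{2j} \le \sum_{j \ge 0}(1-1/d)^{2j} \le \tfrac{d}{2 - 1/d} \le d$ multiplying the $s^2 N / 8$ term (one must track that at step $j$ the effective parameter is $s(1-1/d)^j$, so the Hoeffding contribution is $s^2(1-1/d)^{2j} N/8$), and the linear-in-$s$ term becomes $-s(1-1/d)^t dN$ exactly. The main obstacle — though a mild one — is bookkeeping the geometric factors correctly: one needs the bound $\sum_{j=0}^{t-1}(1-1/d)^{2j} \le d$ to collapse the accumulated variance terms into the stated $\tfrac{s^2 dN}{8}$, and to make sure the induction hypothesis is stated with the right rescaled $s$ so the recursion closes. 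Everything else is a direct application of Hoeffding's lemma and the tower property.
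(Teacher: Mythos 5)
Your argument is essentially identical to the paper's: condition on $\mathcal{F}_t$, factor across urns by conditional independence, apply Hoeffding's lemma per Bernoulli, get the recursion $\mathbb{E}(e^{-sZ_{t+1}}) \le e^{s^2N/8}\,\mathbb{E}(e^{-s(1-1/d)Z_t})$, unroll, and bound the geometric sum by $d$. One bookkeeping slip: unrolling from $Z_{t+1}$ down to the deterministic base gives the linear term $-s(1-1/d)^{t+1}dN = -s(1-1/d)^{t}(d-1)N$, not $-s(1-1/d)^{t}dN$ ``exactly'' as you wrote (the paper stops at $Z_1=(d-1)N$), but since $(d-1)N \ge dN/2$ for $d\ge 2$ the stated bound still follows.
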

\begin{proof}
    Denoting $\alpha := 1- \frac{1}{d}$,
    \begin{align*}
    \mathbb{E}(e^{-sZ_{t+1}}) &= \mathbb{E}(\mathbb{E}(e^{-s(Z_t - I_{t+1})} \mid \mathcal{F}_t))
    \\
    &= \mathbb{E}(e^{-sZ_t} \mathbb{E}(e^{sI_{t+1}} \mid \mathcal{F}_t)) \\
    &\stackrel{(a)}{=} \mathbb{E}(e^{-sZ_t} \prod_{i=1}^{N}\mathbb{E}(e^{sI_{t+1}^{i}} \mid \mathcal{F}_t))   \\
    &\stackrel{(b)}{\leq} \mathbb{E} \left(e^{-sZ_t} \prod_{i=1}^{N} e^{s\mathbb{E}(I_{t+1}^{i}\mid \mathcal{F}_t)) + \frac{s^2}{8}}   \right) \\
    &\stackrel{(c)}{=} e^{\frac{s^2N}{8}} \mathbb{E} \left(e^{-sZ_t} \prod_{i=1}^{N} e^{s \frac{Z_{t}^{i}}{d}}   \right) \\
    &= e^{\frac{s^2N}{8}} \mathbb{E} \left(e^{-sZ_t} e^{s \frac{Z_{t}}{d}}   \right) \\
    &= e^{\frac{s^2N}{8}} \mathbb{E} \left(e^{-s\alpha Z_t}  \right) \\
    &\stackrel{(d)}{\leq} e^{\frac{s^2N}{8}} \left(e^{\frac{\alpha^2 s^2N}{8}} \mathbb{E} \left(e^{-s\alpha^2 Z_{t-1}}  \right) \right)  \\
    &\leq \dots \\
    &\leq e^{\frac{s^2N}{8}(1+\alpha^2 + \alpha^4 +  \dots + \alpha^{2(t-1)})} \mathbb{E} \left(e^{-s\alpha^{t} Z_1}  \right)
    \\
    &\stackrel{(e)}{=} e^{\frac{s^2N}{8}(1+\alpha^2 + \alpha^4+ \dots + \alpha^{2(t-1)})} e^{-s\alpha^{t}(d-1)N}
    \\
    &\leq e^{\frac{s^2N}{8}(1+\alpha + \alpha^2+ \dots + \alpha^{(t-1)})} e^{-s\alpha^{t}(d-1)N}
    \\
    &\leq e^{\frac{s^2 dN}{8} -s\alpha^{t}\frac{dN}{2}}
\end{align*}
where $(a)$ follows since each urn is sampled from independently; $(b)$ applies Hoeffding's lemma; $(c)$ uses the observation that, given the filtration up to time $t$, the $i^{th}$ urn at time $t+1$ "sees" a new ball if sampling one of $Z_{t}^{i}$ uncollected balls out of $d$; $(d)$ is the first recursive application of the bound; and $(e)$ uses the simple observation that the first draw always "sees" $N$ balls, so $Z_1 = (d-1)N$ almost surely.

\end{proof}
We are now in a position to prove Lemma~\ref{lemma:parallel_ccp}.

\begin{proof}[Proof of Lemma~\ref{lemma:parallel_ccp}]
    Fix $\beta \in (0,1)$ to be decided later.
    Assume that the following holds
    \begin{equation} \label{ineq:deficite-samples}
        t \leq \beta d\log d N.
    \end{equation}
    We have $(1-\frac{1}{d})^{t} \gtrsim e^{-\frac{t}{d}} \geq (dN)^{-\beta}$, so that for any $s > 0$ and $\epsilon > 0$, by Claim~\ref{claim:martinagle},
    \begin{equation*}
        \mathbb{P}(Z_{t+1} \leq \epsilon) \leq e^{s\epsilon}\mathbb{E}(e^{-sZ_{t+1}}) \leq e^{s \epsilon + \frac{s^2 dN}{8} -s(1-\frac{1}{d})^{t}\frac{dN}{2}} \lesssim e^{s\epsilon + \frac{s^2 dN}{8} - \frac{s}{2} (dN)^{1-\beta}}.
    \end{equation*}

    Let us constrain $\epsilon \in (0, \frac{1}{2}(dN)^{1-\beta})$ and take $s = \frac{4}{dN} ( \frac{1}{2}(dN)^{1-\beta} - \epsilon)$ to give
    \begin{equation*}
        \mathbb{P}(Z_{t+1} \leq \epsilon) \lesssim \exp\left(- \frac{2}{dN} \left( \frac{1}{2}(dN)^{1-\beta}-\epsilon \right)^2 \right).
    \end{equation*}
    Suppose $\epsilon= \frac{1}{4}(dN)^{1-\beta}$ and $dN \geq 7$ so that 
    \begin{equation*}
        \mathbb{P}(Z_{t+1} \leq \frac{1}{4} (dN)^{1-\beta}) \leq \exp\left(-\frac{(dN)^{1-2\beta}}{8} \right) .
    \end{equation*}
    In particular, for $\beta = 1/4$ and $dN \geq 78$, this implies
    \begin{equation*}
        \mathbb{P}(Z_{t+1} \leq 1) \leq \mathbb{P}(Z_{t+1} \leq \frac{1}{4} (dN)^{3/4}) \leq \exp\left(-\frac{\sqrt{dN}}{8} \right) \leq \frac{1}{3}.
    \end{equation*}
    Thus, $\mathbb{P}(Z_{t+1} > 1 ) \geq \frac{2}{3}$.
    In other words, with probability $\geq 2/3$ there remains an unseen ball if~\eqref{ineq:deficite-samples} holds for $\beta = 1/4$ and $dN \geq 78$.
\end{proof}

\section{Appendix B: $\Tilde{\Phi}$ for Proof of Lemma~\ref{lemma:rankA}}\label{appendix:phi-tilde}

\bigskip

\begin{equation*}\label{eqn:phi-matrix-real}
    \Tilde{\Phi} := \left( \begin{array}{cccccc}
    \e_{i_1} & \e_{i_2} & \e_{i_3} & \cdots & \e_{i_N} & \\[0.3em]
    \hline \\[-0.8em]
    \e_1 - \e_2 & \textcolor{lightgray}{\o} & \textcolor{lightgray}{\o} & \cdots & \textcolor{lightgray}{\o} & \\
    \e_1 - \e_3 & \textcolor{lightgray}{\o} & \textcolor{lightgray}{\o} & \cdots & \textcolor{lightgray}{\o} & \\
    \vdots & \vdots & \vdots & \ddots & \vdots & \\
    \e_1 - \e_d & \textcolor{lightgray}{\o} & \textcolor{lightgray}{\o} & \cdots & \textcolor{lightgray}{\o} & \\[0.3em]
    \hline \\[-0.8em]
    \textcolor{lightgray}{\o} & \e_1 - \e_2 & \textcolor{lightgray}{\o} & \cdots & \textcolor{lightgray}{\o} & \\
    \textcolor{lightgray}{\o} & \e_1 - \e_3 & \textcolor{lightgray}{\o} & \cdots & \textcolor{lightgray}{\o} & \\
    \vdots & \vdots & \vdots & \ddots & \vdots & \\
    \textcolor{lightgray}{\o} & \e_1 - \e_d & \textcolor{lightgray}{\o} & \cdots & \textcolor{lightgray}{\o} & \\[0.3em]
    \hline \\[-0.8em]
    \textcolor{lightgray}{\o} & \textcolor{lightgray}{\o} & \e_1 - \e_2 & \cdots & \textcolor{lightgray}{\o} & \\
    \textcolor{lightgray}{\o} & \textcolor{lightgray}{\o} & \e_1 - \e_3 & \cdots & \textcolor{lightgray}{\o} & \\
    \vdots & \vdots & \vdots & \ddots & \vdots & \\
    \textcolor{lightgray}{\o} & \textcolor{lightgray}{\o} & \e_1 - \e_d & \cdots & \textcolor{lightgray}{\o} & \\[0.3em]
    \hline \\[-0.8em]
    \vdots & \vdots & \vdots & \ddots & \vdots & \\[0.3em]
    \hline \\[-0.8em]
    \textcolor{lightgray}{\o} & \textcolor{lightgray}{\o} & \textcolor{lightgray}{\o} & \cdots & \e_1 - \e_2 & \\
    \textcolor{lightgray}{\o} & \textcolor{lightgray}{\o} & \textcolor{lightgray}{\o} & \cdots & \e_1 - \e_3 & \\
    \vdots & \vdots & \vdots & \ddots & \vdots & \\
    \textcolor{lightgray}{\o} & \textcolor{lightgray}{\o} & \textcolor{lightgray}{\o} & \cdots & \e_1 - \e_d & \\
    \end{array} \right) := \begin{pmatrix}
        [\e_{i_1}, \dots, \e_{i_N}] \\ \Phi_1 \\ \Phi_2 \\ \vdots \\ \Phi_N
    \end{pmatrix}
\end{equation*}

\end{document}